\newcommand{\longversion}[1]{#1}
\newcommand{\shortversion}[1]{}
\newdimen\prevdp
\def\leftlabel#1{\noalign{\prevdp=\prevdepth
   \kern-\prevdp\nointerlineskip\vbox to0pt{\vss\hbox{\ensuremath{#1}}}\kern\prevdp}}
\newcommand{\PPAD}{\ensuremath{\mathsf{PPAD}}\xspace}
\newcommand{\NP}{\ensuremath{\mathsf{NP}}\xspace}
\newcommand{\NPC}{\ensuremath{\mathsf{NP}}\text{-complete}\xspace}
\newcommand{\NPH}{\ensuremath{\mathsf{NP}}\text{-hard}\xspace}
\newcommand{\PNPH}{para-\ensuremath{\mathsf{NP}\text{-hard}}\xspace}
\newcommand{\el}{\ensuremath{\ell}\xspace}
\newcommand{\WOH}{\ensuremath{\mathsf{W[1]}}-hard\xspace}
\newcommand{\FPT}{\ensuremath{\mathsf{FPT}}\xspace}
\newcommand{\tsat}{\ensuremath{(3,\text{B}2)}-{\sc SAT}\xspace}
\let\oldlambda\lambda
\renewcommand{\lambda}{\ensuremath{\oldlambda}\xspace}
\let\oldalpha\alpha
\renewcommand{\alpha}{\ensuremath{\oldalpha}\xspace}
\let\oldDelta\Delta
\renewcommand{\Delta}{\ensuremath{\oldDelta}\xspace}
\newcommand{\YES}{{\sc yes}\xspace}
\newcommand{\yes}{{\sc yes}\xspace}
\newcommand{\no}{{\sc no}\xspace}
\newcommand{\true}{\text{{\sc true}}\xspace}
\newcommand{\false}{\text{{\sc false}}\xspace}
\newcommand{\CC}{\ensuremath{\mathcal C}\xspace}
\newcommand{\EE}{\ensuremath{\mathcal E}\xspace}
\newcommand{\GG}{\ensuremath{\mathcal G}\xspace}
\newcommand{\HH}{\ensuremath{\mathcal H}\xspace}
\newcommand{\II}{\ensuremath{\mathcal I}\xspace}
\newcommand{\RR}{\ensuremath{\mathcal R}\xspace}
\renewcommand{\SS}{\ensuremath{\mathcal S}\xspace}
\newcommand{\TT}{\ensuremath{\mathcal T}\xspace}
\newcommand{\VV}{\ensuremath{\mathcal V}\xspace}
\newcommand{\XX}{\ensuremath{\mathcal X}\xspace}
\newcommand{\EB}{\ensuremath{\mathbb E}\xspace}
\newcommand{\NB}{\ensuremath{\mathbb N}\xspace}
\newcommand{\RB}{\ensuremath{\mathbb R^+}\xspace}
\newcommand{\eps}{\ensuremath{\varepsilon}\xspace}
\renewcommand{\epsilon}{\eps}
\newcommand{\ignore}[1]{}
\newcommand{\pr}{\ensuremath{\prime}}
\newcommand{\prr}{\ensuremath{{\prime\prime}}}
\renewcommand{\leq}{\leqslant}
\renewcommand{\geq}{\geqslant}
\renewcommand{\ge}{\geqslant}
\setlist[enumerate]{labelwidth=!, labelindent=0pt}
\crefname{theorem}{Theorem}{\bf Theorems}
\crefname{observation}{Observation}{\bf Observations}
\crefname{lemma}{Lemma}{\bf Lemmata}
\crefname{corollary}{Corollary}{\bf Corollaries}
\crefname{proposition}{Proposition}{\bf Propositions}
\crefname{definition}{Definition}{\bf Definitions}
\crefname{claim}{Claim}{\bf Claims}
\crefname{reductionrule}{Reduction rule}{\bf Reduction rules}
\begin{document}

\title{On Binary Networked Public Goods Game with Altruism}
%
%
\author{Arnab Maiti\inst{1}\orcidID{0000-0002-9142-6255} \and
Palash Dey\inst{2}\orcidID{0000-0003-0071-9464}}
\authorrunning{Maiti et al.}
%
\institute{University of Washington 
\email{arnabm2@uw.edu\inst{1}}, Indian Institute of Technology Kharagpur \email{palash.dey@cse.iitkgp.ac.in\inst{2}}}
%


\maketitle              

\begin{abstract}
In the classical Binary Networked Public Goods (BNPG) game, a player can either invest in a public project or decide not to invest. Based on the decisions of all the players, each player receives a reward as per his/her utility function. However, classical models of BNPG game do not consider altruism which players often exhibit and can significantly affect equilibrium behavior. Yu et al.~\cite{ijcai2021-69} extended the classical BNPG game to capture the altruistic aspect of the players. We, in this paper, first study the problem of deciding the existence of a Pure Strategy Nash Equilibrium (PSNE) in a BNPG game with altruism. This problem is already known to be \NPC. We complement this hardness result by showing that the problem admits efficient algorithms when the input network is either a tree or a complete graph. We further study the Altruistic Network Modification problem, where the task is to compute if a target strategy profile can be made a PSNE by adding or deleting a few edges. This problem is also known to be \NPC. We strengthen this hardness result by exhibiting intractability results even for trees. A perhaps surprising finding of our work is that the above problem remains \NPH even for bounded degree graphs when the altruism network is undirected but becomes polynomial-time solvable when the altruism network is directed. We also show some results on computing an MSNE and some parameterized complexity results. In summary, our results show that it is much easier to predict how the players in a BNPG game will behave compared to how the players in a BNPG game can be made to behave in a desirable way.
\end{abstract}
\section{Introduction}

In a binary networked public goods (in short BNPG) game, a player can either decide to invest in a public project or decide not to invest in it. Every player however incurs a cost for investing. Based on the decision of all the players, each player receives a reward as per his/her externality function. The net utility is decided based on the reward a player receives and the cost a player incurs. Usually, the externality function and cost of investing differ for every player, making the BNPG game heterogeneous. In some scenarios, the externality function and cost of investing can be the same for every player, making the BNPG game fully homogeneous. Many applications of public goods, for example, wearing a mask \cite{maskpublic}, getting vaccinated \cite{brito1991externalities}, practicing social distancing \cite{cato2020social}, reporting crimes etc., involve binary decisions. Such domains can be captured using BNPG game. A BNPG game is typically modeled using a network of players which is an undirected graph \cite{galeotti2010network}.

We also observe that there are some societies where few people wear masks and/or get themselves vaccinated, and there are some other societies where most people wear masks and get themselves vaccinated \cite{nomask,nomask1}. This can be attributed to differences in altruistic behavior among various societies \cite{bir2021social,maria2021altruism}. In an altrusitic society, people consider their as well as their neighbors' benefit to take a decision. For example, young adults may wear a mask not only to protect themselves but also to protect their elderly parents and young children at home. Altruism can be modeled using an altruistic network which can be either an undirected graph or a directed graph \cite{ijcai2021-69}. Symmetric altruism (respectively asymmetric altruism) occurs when the altruistic network is undirected (respectively directed). The utility that a player receives depends on both the input network and the (incoming) incident edges in the altruistic network.

We study the BNPG game with altruism for two different problem settings. First, we look at the problem of deciding the existence of Pure Strategy Nash Equilibrium (PSNE) in the BNPG game with altruism. In any game, determining a PSNE is an important problem as it allows a social planner to predict the behaviour of players in a strategic setting and make appropriate decisions. It is known that deciding the existence of PSNE in a BNPG game (even without altruism) is NP-Complete \cite{yu2020computing}. This paper mainly focuses on deciding the existence of PSNE in special networks like trees, complete graphs, and graphs with bounded circuit rank. The circuit rank of an undirected graph is the minimum number of edges that must be removed from the graph to make it acyclic.

In the second problem setting, also known as Altruistic Network Modification (in short ANM), we can add or delete an edge from the altruistic network, and each such operation has a non-negative cost associated with it. The aim here is to decide if a target strategy profile can be made a PSNE by adding or deleting edges with certain budget constraints. This problem was first studied by \cite{ijcai2021-69} where they showed that ANM is an NP-Complete problem. This problem enables policymakers to strategically run campaigns to make a society more altruistic and achieve desirable outcomes like everyone wearing a mask and getting vaccinated.  This paper mainly focuses on ANM in sparse input networks like trees and graphs with bounded degree.

\subsection{Contribution} 
\begin{center}
	\begin{tabular}{|m{2cm} |m{1.5cm}  |m{1.5cm} |m{1.5cm} |  }
		\hline
		Input graph type & PSNE existence & ANM symmetric altruism & ANM asymmetric altruism \\
		\hline
		Tree & \textbf{P} & \textbf{NP-hard} & \textbf{NP-hard} \\
		\hline
		Clique & \textbf{P} & \NPH ($\star$) & \NPH ($\star$) \\
		\hline
		Bounded degree & \NPH ($\star\star$) & \textbf{NP-hard} & \textbf{P}\\
		\hline
		Bounded circuit rank & \textbf{P} & \textbf{NP-hard} & \textbf{NP-hard}\\
		\hline
	\end{tabular}
	\captionof{table}{List of results (our results are in bold). PSNE existence results hold for both symmetric and asymmetric altruism.}
	\label{table:1}
\end{center}

We show that the problem of deciding the existence of PSNE in BNPG game with asymmetric altruism is polynomial-time solvable if the input network is either a tree~[\Cref{thm:1}], complete graph~[\Cref{thm:3}] or graph with bounded circuit rank~[\Cref{thm:2}]. Moreover, in \Cref{thm:1}, we formulated a non-trivial ILP (not the ILP that follows immediately from the problem definition) and depicted a greedy polynomial time algorithm~[\Cref{algo:1}] to solve it. This strengthens the tractable results for tree, complete graph and graph with bounded circuit rank in \cite{yu2020computing,maiti2020parameterized} as the previous results were depicted for BNPG games without altruism. Hence, existence of a PSNE can be efficiently decided in an intimately connected society where everyone knows others and thus the underlying graph is connected, and for sparsely connected society where the circuit rank could be low. However, the problem is open for graphs with bounded treewidth, and it is known that the problem even without altruism is \WOH for the parameter treewidth ~\cite{maiti2020parameterized}. The problem of deciding the existence of PSNE in BNPG game even without altruism is known to be \NPC \cite{yu2020computing}. A natural but often under-explored question here is if an MSNE can be computed efficiently. We show that computing an MSNE in BNPG game with symmetric altruism is \PPAD-hard~[\Cref{thm:4}].

ANM with either asymmetric or symmetric altruism is known to be \NPC when the input network is a clique \cite{ijcai2021-69}. We complement this by showing that ANM with either asymmetric or symmetric altruism is \NPC even when the input network is a tree (the circuit rank of which is zero) and the BNPG game is fully homogeneous~[\Cref{thm:5,thm:6}]. We also show that ANM with symmetric altruism is known to be \PNPH for the parameter maximum degree of the input network even when the BNPG game is fully homogeneous and the available budget is infinite~[\Cref{thm:9}].  However, with asymmetric altruism, the problem is \FPT for the parameter maximum degree of the input network~[\Cref{thm:8}]. To show this result, we designed an $O(2^{n/2})$ time binary search based algorithm~[\Cref{algo:2}] for Minimum Knapsack problem. We are the first to provide an algorithm better than $O(2^n)$ time for Minimum Knapsack problem to the best of our knowledge.

In summary, our paper provides a more fine-grained complexity theoretic landscape for deciding if a PSNE exists in a BNPG game with altruism and the ANM problem which could be of theoretical as well as practical interest.
We summarize all the main results (including that of prior work) in \Cref{table:1}. There ($\star$) denotes the results from \cite{ijcai2021-69} and ($\star\star$) denotes the results from \cite{maiti2020parameterized}. All the hardness results in the table except for complete graph hold even for fully homogeneous BNPG game. We observe that the PSNE existence problem admits efficient algorithm for many settings compared to ANM. This seems to indicate that enforcing a PSNE is computationally more difficult than finding a PSNE.



\subsection{Related Work}
Our work is related to \cite{yu2020computing} who initiated the study of computing a PSNE in BNPG games. Their results were strengthened by \cite{maiti2020parameterized} who studied the parameterized complexity of deciding the existence of PSNE in BNPG game. Recently, \cite{papadimitriou2021public} studied about public goods games in directed networks and showed intractibility for deciding the existence of PSNE and for finding MSNE. Our work is also related to \cite{ijcai2021-69} who intiated the study of Altruistic Network Modification in BNPG game. \cite{ledyard20202,meier2008windfall} also discussed different ways to capture altruism. In the non-altruistic setting, \cite{kempe2020inducing} worked on modifying networks to induce certain classes of equilibria in BNPG game. Our work is part of graphical games where the fundamental question is to determine the complexity of finding an equilibrium \cite{daskalakis2009complexity,elkind2006nash,gottlob2005pure}. Our model is also related to the best-shot games \cite{dall2011optimal} as it is a special case of BNPG game.~\cite{galeotti2010network,levit2018incentive,manshadi2009supermodular,komarovsky2015efficient} also discussed some important variations of graphical games.

\section{Preliminaries}\label{sec:prelim}

Let $[n]$ denote the set $\{1,\ldots,n\}$. Let $\GG=(\VV,\EE)$ be an input network with $n$ vertices (each denoting a player). The input network is always an undirected graph. Let $\HH=(\VV,\EE^\pr)$ be an altruistic network on the same set of $n$ vertices. The altruistic network can be directed or undirected graph. An undirected edge between $u,v\in \VV$ is represented by $\{u,v\}$. Similarly a directed edge from $u$ to $v$ is represented by $(u,v)$. $N_v$ is the set of all neighbours (resp. out-neighbours) of the vertex $v$ in an undirected (resp. directed) altrusitic network \HH. Note that $N_v$ is a subset of neightbours of $v$ in \GG. A Binary Networked Public Goods (BNPG) game with asymmetric (resp. symmetric) altruism can be defined on the input graph \GG and the directed (resp. undirected) altruistic network \HH as follows. We are given a set of players \VV, and the strategy set of every player in \VV is $\{0,1\}$. For a strategy profile $\textbf{x}=(x_v)_{v\in\VV}\in\{0,1\}^{|\VV|}$, let $n_v=|\{u\in\VV:\{u,v\}\in\EE, x_u=1\}|$. In this paper, we will be using playing 1 (resp. 0), investing (resp. not investing) and strategy $x_v=1$ (resp. $x_v=0$) interchangeably. Now the utility $U_v(\textbf{x})$ of player $v\in\VV$ is defined as follows.
\begin{equation*}
	U_v(\textbf{x})=g_v(x_v+n_v)+a\sum_{u\in N_v}g_u(x_u+n_u)-c_v\cdot x_v
\end{equation*}
where $g_v:\NB\cup\{0\}\rightarrow\RB\cup\{0\}$ is a non-decreasing externality function in $x$ and $a, c_v\in\RB\cup\{0\}$ are constants. $c_v$ can also interpreted as the cost of investing for player $v$. We denote a BNPG game with altruism by $(\GG=(\VV,\EE),\HH=(\VV,\EE^\pr),(g_v)_{v\in\VV},(c_v)_{v\in\VV},a)$. We also define $\Delta g(x)=g(x+1)-g(x)$ where $x\in \NB\cup\{0\}$.
In this paper, we study a general case of BNPG game called {\em heterogeneous} BNPG game where every player $v\in\VV$ need not have the same externality function $g_v(.)$ and constant $c_v$. If nothing is mentioned, by BNPG game, we are referring to a heterogeneous BNPG game. In this paper, we also study a  special case of BNPG game called {\em fully homogeneous} BNPG game where $g_v=g$ for all $v\in\VV$ and $c_v=c$ for all $v\in \VV$.

In this paper, we mainly focus on {\em pure-strategy Nash Equilibrium (PSNE)}. A strategy profile $\textbf{x}=(x_v)_{v\in\VV}$ is said to be a PSNE of a BNPG game with altruism if the following holds true for all $v\in\VV$ and for all $x^\pr_v\in\{0,1\}$
\[  U_v(x_v,x_{-v}) \ge U_v(x^\pr_v, x_{-v})\]
where $x_{-v}=(x_u)_{u\in \VV\setminus \{v\}}$.

In this paper, we also look at {\em $\varepsilon$-Nash Equilibrium}. Let $\Delta_v$ be a distribution over that strategy set $\{0,1\}$. We define Supp$(\Delta_v)$ to be the support of the distribution $\Delta_v$, that is,  Supp$(\Delta_v)=\{x_v:x_v\in\{0,1\},\Delta_v(x_v)>0\}$  where $\Delta_v(x_v)$ denotes the probability of choosing the strategy $x_v$ by player $v$. Now $(\Delta_v)_{v\in \VV}$ is an {\em $\varepsilon$-Nash Equilibrium} if the following holds true for all  $x^\pr_v\in\{0,1\}$, for all $x_v\in \text{Supp}(\Delta_v)$, for all $v\in\VV$:
\[ \EB_{x_{-v}\sim\Delta_{-v}}[U_v(x_v,x_{-v})] \ge \EB_{x_{-v}\sim\Delta_{-v}} [U_v(x^\pr_v, x_{-v})] -\varepsilon\]
where $\Delta_{-v}=(\Delta_u)_{u\in \VV\setminus \{v\}}$.

\subsection{Altruistic Network Modification}
In this paper we study a special case of Altruistic Network Modification (ANM) which was also studied by \cite{ijcai2021-69}. If nothing is mentioned, by ANM, we are referring to the special case which we will now discuss. We are given a target profile $\textbf{x}^*$, BNPG game on an input graph \GG, an initial altruistic network \HH , a cost function $C(.)$ and budget $B$. In this setting, we can add or delete an edge $e$ from \HH and each such operation has a non-negative cost $C(e)$ associated with it. We denote an instance of ANM with altruism by  $(\GG=(\VV,\EE),\HH=(\VV,\EE^\pr),(g_v)_{v\in\VV},(c_v)_{v\in\VV},a, C(.),B,\textbf{x}^*)$. The aim of ANM with altruism is to add and delete edges in \HH such that $\textbf{x}^*$ becomes a PSNE and the total cost for adding and deleting these edges is atmost $B$. Note that if the altruism is asymmetric (resp. symmetric), then we can add or delete directed (resp. undirected) edges only. We are not allowed to add any edge between two nodes $u,v$ if $\{u,v\}\notin \EE$.
\subsection{Standard Definitions}
\shortversion{Due to space constraints, we refer the reader to the supplementary material for the definitions of Circuit Rank, \FPT and \PNPH.}
\longversion{
	\begin{definition}[Circuit Rank]\cite{maiti2020parameterized}
		Let the number of edges and number of vertices in a graph $\mathcal{G}$ be $m$ and $n$ respectively. Then circuit rank is defined to be $m-n+c$ ($c$ is the number of connected components in the graph). Note that circuit rank is not the same as feedback arc set.
	\end{definition}
	
	\begin{definition}[\FPT]\cite{maiti2021parameterized}
		A tuple $(x, k)$, where k is the parameter, is an instance of a parameterized problem. \emph{Fixed parameter tractability} (FPT) refers to solvability in time  $f(k) \cdot p(|x|)$ for a given instance $(x, k)$, where  $p$ is a polynomial in the input size $|x|$ and $f$ is an arbitrary computable function of $k$.
	\end{definition}
	\begin{definition}[\PNPH]\cite{maiti2021parameterized}
		We say a parameterized problem is \PNPH if it is \NPH even for some constant values of the parameter.
	\end{definition}
}
\section{Results for computing equilibrium}
In this section, we present the results for deciding the existence of PSNE and finding MSNE in BNPG game with altruism. {We have omitted few proofs. They are marked by ($\star$) and they are available in the appendix. }

\cite{yu2020computing} showed that the problem of checking the existence of PSNE in BNPG game without altruism is polynomial time solvable when the input network is a tree. We now provide a non-trivial algorithm to show that the problem of checking the existence of PSNE in BNPG game with asymmetric altruism is polynomial time solvable when the input network is a tree.
\longversion{
\begin{theorem}}
\shortversion{
\begin{theorem}[$\star$]}\label{thm:1}
The problem of checking the existence of PSNE in BNPG game with asymmetric altruism is polynomial time solvable when the input network is a tree.
\end{theorem}
\longversion{
\begin{proof}}
\shortversion{
\begin{proof}[Proof Sketch]}
For each player $v\in \VV$, let $d_v$ denote the degree of $v$. At each node $v$ with parent $u$, we maintain a table of tuples $(x_u,n_u,x_v,n_v)$ of valid configurations. A tuple $(x_u,n_u,x_v,n_v)$ is said to be a valid configuration if there exists a strategy profile $\textbf{x}^\pr=(x_v^\pr)_{v\in\VV}$ such that the following holds true:
\begin{itemize}
 \item $x_u^\pr=x_u$, $x_v^\pr=x_v$
\item The number of neighbours of $u$ and $v$ playing $1$ in $\textbf{x}^\pr$ is $n_u$ and $n_v$ respectively
\item None of the players in the sub-tree rooted at $v$ deviate from their strategy in $\textbf{x}^\pr$
\end{itemize}
Note that the root node $r$ doesn't have a parent. Hence, we consider an imaginary parent $p$ with $x_p=0$ and $g_p(x)=0$ for all $x\geq 0$. Hence if there is a tuple in the table of root node $r$ then we can conclude that there is a PSNE otherwise we can conclude that there is no PSNE.

\textbf{Leaf nodes:} We add a tuple $(x_u,n_u,x_v,n_v)$ to the table if $n_v=x_u$, $v$ does not deviate if it plays $x_v$ and $x_v\leq n_u\leq d_u+x_v-1$. Table for the leaf node can be clearly constructed in polynomial time.

\textbf{Non-leaf nodes:} For each tuple  $(x_u,n_u,x_v,n_v)$ we do the following. If there is no child $u^\pr$ of $v$ having a tuple of type $(x_v,n_v,x_{u^\pr},n_{u^\pr})$ in its table, then we don't add $(x_u,n_u,x_v,n_v)$ to the table of $v$. Similarly if $n_u> d_u+x_v-1$ or $n_u<x_v$, then we don't add $(x_u,n_u,x_v,n_v)$ to the table of $v$. Otherwise we do the following. Let $U_1$ be the set of children $u'$ of $v$ which have tuples of the type $(x_v,n_v, 1,n_{u'})$ in their table but don't have tuples of the type $(x_v,n_v, 0,n_{u'})$. Let $U_0$ be the set of children $u'$ of $v$ which have tuples of the type $(x_v,n_v, 0,n_{u'})$ in their table but don't have tuples of the type $(x_v,n_v, 1,n_{u'})$. Let $U$ be the set of children $u'$ of $v$ which have tuples of the type $(x_v,n_v, 1,n_{u'})$ and $(x_v,n_v, 0,n_{u'})$ in their table. First let us consider the case when $x_v=1$. Now for each $u'\in (U_1\cup U)\cap N_v$, we find the tuple  $(1,n_v, 1,n_{u'})$ in its table so that $a\cdot\Delta g_{u'}(n_{u'})$ is maximized and let this value be $y_{u'}$. Similarly for each $u'\in (U_0\cup U)\cap N_v$, we find the tuple  $(1,n_v, 0,n_{u'})$ in the table so that $a\cdot\Delta g_{u'}(n_{u'}-1)$ is maximized and let this value be $z_{u'}$. Also $\forall u'\in(U_1\cup U_0\cup U)\setminus N_v$, $y_{u'}=z_{u'}=0$. If $u\in N_v$ then $y_u= a\cdot \Delta g_u(x_u+n_u-1)$ otherwise $y_u=0$. Now we include the tuple $(x_u,n_u,x_v=1,n_v)$ in the table if the optimal value of the following ILP is at least $c_v-\Delta g_v(n_v)-y_u-\sum_{u'\in U_1}y_{u'}-\sum_{u'\in U_0}z_{u'}$.
\begin{align*}
\text{max} \quad
 &\sum_{u'\in U}(x_1^{u'}y_{u'}+x_0^{u'}z_{u'}) \\
\text{s.t.} \quad
&x_1^{u'}+x_0^{u'} = 1 \quad \forall u'\in U\\
&\sum_{u'\in U}x_1^{u'}=n_v-x_u-|U_1| \\
&x_1^{u'},x_0^{u'}\in \{0,1\} \quad \forall u'\in U
\end{align*}
The above ILP can be solved in polynomial time as follows. First sort the values $a_{u'}=|y_{u'}-z_{u'}|$ in non-increasing order breaking ties arbitrarily and order the vertices in $U$ as $\{u_1,\ldots,u_{|U|}\}$ as per this order, that is, $a_{u_i}\geq a_{u_j}$ if $i\leq j$. Then we traverse the list of values in non-increasing order and for the $u'$ corresponding to the value, we choose $x_1^{u'}=1$ if $y_{u'}>z_{u'}$ otherwise we choose $x_0^{u'}=1$. We do this until $\sum_{u'\in U}x_1^{u'} =n_v^\pr$ or $\sum_{u'\in U}x_0^{u'}=|U|-n_v^\pr$ where $n_v^\pr=n_v-x_u-|U_1|$. Remaining values are chosen in a way such that $\sum_{u'\in U}x_1^{u'} =n_v$ is satisfied. For a more detailed description, please see the \Cref{algo:1}.

Now we show the correctness. Consider an optimal solution $x^*$. Let $i$ be the smallest number such that $x_1^{u_i}=1$ as per our algorithm and in optimal solution it is $0$. Similarly let $i'$ be the smallest number such that $x_1^{u_{i'}}=0$ as per our algorithm and in optimal solution it is $1$. We now swap the values of the variables $x_1^{u_i}$ and $x_1^{u_{i'}}$ (resp. $x_0^{u_i}$ and $x_0^{u_{i'}}$) in the optimal solution without decreasing the value of the objective function. Let us assume that $i<i'$. Then it must be the case that $y_{u_i}>z_{u_i}$ otherwise $\forall j>i$, we will have $x_1^{u_{j}}=1$ as per our algorithm. Hence by swapping in the optimal solution, the value of the objective function increases by at least $a_{u_{i}}-a_{u_{i'}}$ which is a non-negative quantity.\shortversion{ Similar arugument exists for the case when $i'<i$.} \longversion{Similarly when $i'<i$, it must  be the case that $y_{u_i}\leq z_{u_i}$ otherwise $\forall j>i$, we will have $x_1^{u_{j}}=0$ as per our algorithm. Hence by swapping in the optimal solution, the value of the objective function increases by at least $a_{u_{i'}}-a_{u_{i}}$ which is a non-negative quantity.} By repeatedtly finding such indices $i,i'$ and then swaping the value of $x_1^{u_i}$ and $x_1^{u_{i'}}$ (resp. $x_0^{u_i}$ and $x_0^{u_{i'}}$) in the optimal solution leads to our solution.

\shortversion{An analogous procedure exists for the case where $x_v=0$. As mentioned earlier if there is any tuple in the table of the root $r$, then we conclude that there is a PSNE otherwise we conclude that there is no such PSNE.}
\longversion{An analogous procedure exists for the case where $x_v=0$. For each $u'\in (U_1\cup U)\cap N_v$, we find the tuple  $(0,n_v, 1,n_{u'})$ in the table so that $a\cdot\Delta g_{u'}(n_{u'}+1)$ is minimized and let this value be $y_{u'}$. Similarly for each $u'\in (U_0\cup U)\cap N_v$, we find the tuple  $(1,n_v, 0,n_{u'})$ in the table so that $a\cdot\Delta g_{u'}(n_{u'})$ is minimized and let this value be $z_{u'}$. Also $\forall u'\in(U_1\cup U_0\cup U)\setminus N_v$, $y_{u'}=z_{u'}=0$. If $u\in N_v$ then $y_u= a\cdot \Delta g_u(x_u+n_u)$ otherwise $y_u=0$. Now we include the tuple $(x_u,n_u,x_v=0,n_v)$ in the table if the optimal value of the following ILP is at most $c_v-\Delta g_v(n_v)-y_u-\sum_{u'\in U_1}y_{u'}-\sum_{u'\in U_0}z_{u'}$.
\begin{align*}
\text{min} \quad
 &\sum_{u'\in U}(x_1^{u'}y_{u'}+x_0^{u'}z_{u'}) \\
\text{s.t.} \quad
&x_1^{u'}+x_0^{u'} = 1 \quad \forall u'\in U\\
&\sum_{u'\in U}x_1^{u'}=n_v-x_u-|U_1| \\
&x_1^{u'},x_0^{u'}\in \{0,1\} \quad \forall u'\in U
\end{align*}
The above ILP can be solved in polynomial time as follows. First sort the values $a_{u'}=|y_{u'}-z_{u'}|$ in non-increasing order breaking ties arbitrarily and order the vertices in $U$ as $\{u_1,\ldots,u_{|U|}\}$ as per this order, that is, $a_{u_i}\geq a_{u_j}$ if $i\leq j$. Then we traverse the list in non-increasing order and for the $u'$ corresponding to the value, we choose $x_0^{u'}=1$ if $y_{u'}>z_{u'}$ otherwise we choose $x_1^{u'}=1$. We do this until $\sum_{u'\in U}x_1^{u'} =n_v^\pr$ or $\sum_{u'\in U}x_0^{u'}=|U|-n_v^\pr$ where $n_v^\pr=n_v-x_u-|U_1|$. Remaining values are chosen in a way such that $\sum_{u'\in U}x_1^{u'} =n_v$ is satisfied.

Now we show the correctness. Consider an optimal solution $x^*$. Let $i$ be the smallest number such that $x_1^{u_i}=1$ as per our algorithm and in optimal solution it is $0$. Similarly let $i'$ be the smallest number such that $x_1^{u_{i'}}=0$ as per our algorithm and in optimal solution it is $1$. We now swap the values of the variables $x_1^{u_i}$ and $x_1^{u_{i'}}$ (resp. $x_0^{u_i}$ and $x_0^{u_{i'}}$) in the optimal solution without increasing the value of the objective function. Let us assume that $i<i'$. Then it must be the case that $y_{u_i}\leq z_{u_i}$ otherwise $\forall j>i$, we will have $x_1^{u_{j}}=1$ as per our algorithm. Hence by swapping in the optimal solution, the value of the objective function decreases by at least $a_{u_{i}}-a_{u_{i'}}$ which is a non-negative quantity. Similarly when $i'<i$, it must  be the case that $y_{u_i}> z_{u_i}$ otherwise $\forall j>i$, we will have $x_1^{u_{j}}=0$ as per our algorithm. Hence by swapping in the optimal solution, the value of the objective function increases by at least $a_{u_{i'}}-a_{u_{i}}$ which is a non-negative quantity. By repeatedtly finding such indices $i,i'$ and then swapping the value of $x_1^{u_i}$ and $x_1^{u_{i'}}$ (resp. $x_0^{u_i}$ and $x_0^{u_{i'}}$) in the optimal solution leads to our solution.

As mentioned earlier if there is any tuple in the table of the root $r$, then we conclude that there is a PSNE otherwise we conclude that there is no such PSNE.}
\end{proof}

\begin{algorithm}[!htbp]
	\caption{ILP Solver} \label{algo:1}
	\begin{algorithmic}[1]
\STATE $\forall u' \in U$, $a_{u'}\gets |y_{u'}-z_{u'}|$.
\STATE Order the vertices in $U$ as $\{u_1,\ldots, u_{|U|}\}$ such that $\forall i,j\in[n]$, we have $a_{u_i}\geq a_{u_j}$ if $i\leq j$.
\STATE $\forall u' \in U$, $x_0^{u'}\gets 0$ and $x_1^{u'}\gets 0$
\STATE $n_v^\pr\gets n_v-x_u-|U_1|$
\FOR{ $i=1$ to $|U|$}
\IF{$\sum_{u'\in U}x_1^{u'} =n_v^\pr$}
\STATE $x_0^{u_i}\gets 1$ and $x_1^{u_i}\gets 0$
\ELSIF{$\sum_{u'\in U}x_0^{u'}=|U|-n_v^\pr$}
\STATE $x_0^{u_i}\gets 0$ and $x_1^{u_i}\gets 1$
\ELSE
\IF{$y_{u_i}>z_{u_i}$}
\STATE $x_0^{u_i}\gets 0$ and $x_1^{u_i}\gets 1$
\ELSE
\STATE $x_0^{u_i}\gets 1$ and $x_1^{u_i}\gets 0$
\ENDIF
\ENDIF
\ENDFOR
	\end{algorithmic}
\end{algorithm}

\begin{corollary}\label{cor:tree}
Given a BNPG game with asymmetric  altruism on a tree $\GG=(\VV,\EE)$, a set $\SS\subseteq\VV$ and a pair of tuples $(x_v^\pr)_{v\in\SS}\in\{0,1\}^{|\SS|}$ and $(n_v^\pr)_{v\in\VV^\pr}\in\{0,1,\ldots,n\}^{|\SS|}$, we can decide in polynomial time if there exists a PSNE $\textbf{x}^*=(x_v)_{v\in\VV}\in\{0,1\}^n$ for the BNPG game with asymmetric  altruism such that $x_v=x_v^\pr$ for every $v\in\SS$ and the number of neighbors of $v$ playing $1$ in $\textbf{x}^*$ is $n_v^\pr$  for every $v\in \SS$.
\end{corollary}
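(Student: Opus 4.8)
The plan is to run the dynamic-programming algorithm behind \Cref{thm:1} essentially unchanged, adding only a pruning step that enforces the constraints coming from $\SS$. Recall that the algorithm of \Cref{thm:1} roots the tree (at $r$, with an imaginary parent $p$ having $x_p=0$ and $g_p\equiv 0$) and computes, bottom-up, for every vertex $v$ with parent $u$ a table $T_v$ of all \emph{valid configurations} $(x_u,n_u,x_v,n_v)$: those realizable by a strategy profile on the subtree rooted at $v$ together with the choice $x_u$ for the parent, in which $u$ and $v$ have exactly $n_u$ and $n_v$ neighbours playing $1$ and no vertex in $v$'s subtree wants to deviate. My claim is that it suffices, after building $T_v$ in the usual way, to delete from $T_v$ every tuple with $x_v\neq x_v^\pr$ or $n_v\neq n_v^\pr$ whenever $v\in\SS$; the required PSNE exists if and only if the resulting table of the root is non-empty.

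Correctness follows by strengthening the invariant proved in \Cref{thm:1}. Call $(x_u,n_u,x_v,n_v)$ \emph{feasible at $v$} if it is realized by some profile on $v$'s subtree (plus the parent choice $x_u$) that additionally satisfies $x_w=x_w^\pr$ and $|\{w' : \{w,w'\}\in\EE,\ x_{w'}=1\}|=n_w^\pr$ for every $w\in\SS$ in the subtree of $v$; note this puts no condition on the parent coordinates $x_u,n_u$, which is correct since the parent lies outside the subtree. By induction on the tree I claim the pruned $T_v$ equals exactly the set of feasible configurations at $v$. For a leaf this is immediate, since its valid configurations are characterized by the simple conditions listed in \Cref{thm:1} and intersecting with $\{x_v=x_v^\pr,\,n_v=n_v^\pr\}$ (when $v\in\SS$) is exactly the single extra constraint, the subtree being $\{v\}$. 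For an internal vertex, the combination step of \Cref{thm:1}---the partition of the children into $U_0,U_1,U$ together with the greedy solution of the ILP (\Cref{algo:1})---is applied verbatim, now to the \emph{pruned} children tables, which by the induction hypothesis are exactly the feasible configurations of the children; since the no-deviation test at $v$ and the ILP only range over whatever options remain in the children's tables, the optimality argument of \Cref{thm:1} is unaffected, so a tuple survives this step iff it is a valid configuration realized by a profile whose restriction to each child subtree is feasible. The further deletion of tuples with $x_v\neq x_v^\pr$ or $n_v\neq n_v^\pr$ when $v\in\SS$ then adds precisely the one remaining constraint, on $v$ itself. Applying the invariant at $r$: since every vertex lies in $r$'s subtree and a valid configuration at the root is exactly a PSNE of the whole game (by the imaginary-parent setup), the pruned $T_r$ is non-empty iff a PSNE with all the $\SS$-constraints exists.

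For the running time, nothing changes asymptotically from \Cref{thm:1}: each $T_v$ still has $O(n^2)$ entries because $x_u,x_v\in\{0,1\}$ and $n_u,n_v\in\{0,\dots,n\}$, the pruning is a single linear scan, and \Cref{algo:1} runs in polynomial time per entry. The only point that requires a little care---and is the reason the pruning is done on $v$'s own table rather than while $v$ is being processed as a child---is that the constraint $n_w^\pr$ counts \emph{all} neighbours of $w$, including $w$'s parent, and this full count is recorded in the coordinate $n_w$ of $T_w$; enforcing $(x_w,n_w)=(x_w^\pr,n_w^\pr)$ there is therefore exactly the right thing, whereas it could not be imposed correctly in a table built before $w$'s parent's strategy has been fixed. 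I do not foresee any deeper obstacle.
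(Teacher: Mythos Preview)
Your proposal is correct and takes essentially the same approach as the paper: run the dynamic program of \Cref{thm:1} and, for each $v\in\SS$, discard from $T_v$ the tuples that do not have $x_v=x_v^\pr$ and $n_v=n_v^\pr$. The paper's proof is a one-line remark to this effect, whereas you additionally spell out the inductive invariant and the running-time analysis, which is fine but not strictly necessary.
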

\longversion{
\begin{proof}
In the proof of Theorem 1, just discard those entries from the table of $u\in \SS$ which don't have $x_u=x_u^\pr$ and $n_u=n_u^\pr$.
\end{proof}
}
\cite{maiti2020parameterized} showed that the problem of checking the existence of PSNE in BNPG game without altruism is polynomial time solvable when the input network is a graph with bounded circuit rank. By using the algorithm in Theorem 1 as a subroutine and extending the ideas of \cite{maiti2020parameterized} to our setting, we show the following.

\begin{theorem}[$\star$]\label{thm:2}
The problem of checking the existence of PSNE in BNPG game with asymmetric altruism is polynomial time solvable when the input network is a graph with bounded circuit rank.
\end{theorem}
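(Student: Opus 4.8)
The plan is to reduce the problem on a graph $\GG$ of bounded circuit rank $k$ to polynomially many invocations of the constrained tree algorithm of \Cref{cor:tree}. Fix any spanning forest of $\GG$ and let $F\subseteq\EE$ be the set of its remaining $k$ edges (so $|F|$ is the circuit rank); then $\GG'=(\VV,\EE\setminus F)$ is a forest. Let $W$ be the set of endpoints of the edges of $F$, so $|W|\le 2k=O(1)$; every edge of $F$ has both endpoints in $W$, and a vertex has a different neighbourhood in $\GG$ than in $\GG'$ only if it lies in $W$. First I would enumerate every pair $\big((x_v)_{v\in W},(n_v)_{v\in W}\big)\in\{0,1\}^{|W|}\times\{0,1,\ldots,n\}^{|W|}$, to be read as the intended strategy of each $v\in W$ and the intended number of its $\GG$-neighbours playing $1$ in a putative PSNE. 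As $k$ is a constant there are only polynomially many such candidates, and the algorithm outputs \YES exactly when the test below succeeds for at least one candidate.

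Fix a candidate. For each $v\in\VV$ let $\delta_v$ be the number of edges of $F$ incident to $v$ whose other endpoint $u$ has $x_u=1$; thus $\delta_v=0$ for $v\notin W$, and $\delta_v$ is determined by the candidate. I build a modified BNPG game with asymmetric altruism on the forest $\GG'$ as follows: the altruistic network is obtained from $\HH$ by deleting every arc between the two endpoints of a common edge of $F$ (it is then a subgraph of $\GG'$, so the modified game decomposes as a disjoint union of games, one per tree of $\GG'$); the externality function of $v$ becomes $\tilde g_v(y)=g_v(y+\delta_v)$, which is still non-decreasing; the altruism constant stays $a$; and the cost of each $v\in W$ is replaced by $c_v-a\sum_u\Delta g_u(x_u+n_u-x_v)$, the sum ranging over the $u\in N_v$ with $\{u,v\}\in F$ (all such $u$ lie in $W$, so their $n_u$ are supplied by the candidate). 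I then call \Cref{cor:tree} once for each tree $T$ of $\GG'$, taking $\SS$ to be the set of vertices of $W$ that lie in $T$, target strategies $x^\pr_v=x_v$, and target counts $n^\pr_v=n_v-\delta_v$ for $v\in\SS$ (rejecting the candidate immediately if some $n_v-\delta_v$ is negative or exceeds the degree of $v$ in $\GG'$). The candidate passes the test iff \Cref{cor:tree} returns \YES on every tree of $\GG'$.

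For correctness, the key point is that a strategy profile $\textbf{x}$ which agrees with the candidate on the strategies of the vertices of $W$ and on their numbers of $\GG$-neighbours playing $1$ is a PSNE of the original game on $\GG$ if and only if it is a PSNE of the modified game on $\GG'$. Writing $\hat n_u$ and $n_u$ for the numbers of $\GG'$- and $\GG$-neighbours of $u$ playing $1$ in $\textbf{x}$, one always has $\hat n_u=n_u-\delta_u$, so $g_u(x_u+n_u)=\tilde g_u(x_u+\hat n_u)$; this identity expresses each player's original utility as its modified (pre-cost-shift) utility plus a correction term. For a player outside $W$ this correction term is identically $0$. For $v\in W$ it equals $v$'s altruistic payoff toward its $F$-neighbours; since those neighbours lie in $W$ and the candidate — via the count constraints passed to \Cref{cor:tree} — fixes their $\GG$-neighbour counts in $\textbf{x}$, flipping $v$'s own strategy changes this payoff by exactly the fixed amount $a\sum_u\Delta g_u(x_u+n_u-x_v)$ that was subtracted from $v$'s cost. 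Hence every player's best-response inequality coincides in the two games. Since the modified game decomposes over the trees of $\GG'$, the per-tree PSNEs returned by \Cref{cor:tree} combine into one PSNE of the modified game extending the candidate, which by the equivalence is a PSNE of the original game; conversely any PSNE $\textbf{x}^*$ of the original game is extended by the candidate given by its strategies and $\GG$-neighbour counts on $W$, and that candidate passes the test. As there are polynomially many candidates and \Cref{cor:tree} runs in polynomial time, the whole procedure runs in polynomial time.

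The step I expect to be the main obstacle is precisely this altruism bookkeeping across the cut edges $F$: when a vertex of $W$ changes its strategy it perturbs the externality of its $F$-neighbours and hence its own altruistic reward, and one must verify both directions of the equivalence after folding that effect into a single additive change of the vertex's cost — a change whose numerical value is pinned down only once we commit, through the guessed $n$-values and the count constraints handed to \Cref{cor:tree}, to the neighbourhood counts of those $F$-neighbours. The remaining pieces (extracting $F$, observing that the modified game splits over the trees of $\GG'$, bounding the number of guesses, and noting that the possibly negative modified costs cause no difficulty because the algorithm behind \Cref{thm:1} and \Cref{cor:tree} only compares numerical quantities) are routine.
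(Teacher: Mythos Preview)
Your proposal is correct and follows essentially the same approach as the paper: guess the strategies and $\GG$-neighbour counts of the endpoints of the $k$ non-tree edges, shift each such vertex's externality by $\delta_v$ and its cost by the altruistic contribution across the deleted edges, prune the altruistic arcs along those edges, and then invoke \Cref{cor:tree} on the resulting forest. Your compact formula $c_v-a\sum_u\Delta g_u(x_u+n_u-x_v)$ is exactly the paper's two-case cost adjustment, and your explicit decomposition over the trees of $\GG'$ just spells out the paper's WLOG-connected assumption.
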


\cite{yu2020computing,maiti2020parameterized} showed that the problem of checking the existence of PSNE in BNPG game without altruism is polynomial time solvable when the input network is a complete graph. By extending their ideas to our setting, we show the following.

\begin{theorem}[$\star$]\label{thm:3}
The problem of checking the existence of PSNE in BNPG game with asymmetric altruism is polynomial time solvable when the input network is a complete graph.
\end{theorem}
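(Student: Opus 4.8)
The plan is to exploit the fact that, when the input network $\GG$ is complete, the utility of every player depends on the strategy profile only through the \emph{number} of investors, not through their identities. Indeed, if a profile $\mathbf{x}$ has exactly $k$ players playing $1$, then every other vertex is a neighbour of $v$, so $n_v = k - x_v$ and hence $x_v + n_v = k$; the same equality $x_u + n_u = k$ holds for every $u \in N_v \subseteq \VV\setminus\{v\}$. Consequently
\[ U_v(\mathbf{x}) = g_v(k) + a\sum_{u\in N_v} g_u(k) - c_v x_v = f_v(k) - c_v x_v, \]
where I set $f_v(k) := g_v(k) + a\sum_{u\in N_v} g_u(k)$. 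The whole table $\{f_v(k): v\in\VV,\ 0\le k\le n\}$ can be precomputed in $O(n^3)$ time.

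Next I would rewrite the PSNE conditions in these terms. Fix $k$ and a profile $\mathbf{x}$ with exactly $k$ investors. If a player $v$ with $x_v=1$ switches to $0$, the new profile has $k-1$ investors, and by the same computation $v$'s deviation utility equals $f_v(k-1)$; so $v$ has no profitable deviation iff $f_v(k)-f_v(k-1)\ge c_v$. Symmetrically, if a player $v$ with $x_v=0$ switches to $1$, the new profile has $k+1$ investors and $v$'s deviation utility equals $f_v(k+1)-c_v$; so $v$ is stable iff $f_v(k+1)-f_v(k)\le c_v$. Therefore a profile with investor set $S$, $|S|=k$, is a PSNE if and only if $S\subseteq A_k:=\{v: f_v(k)-f_v(k-1)\ge c_v\}$ and $\VV\setminus S\subseteq B_k:=\{v: f_v(k+1)-f_v(k)\le c_v\}$, with the convention that the first condition is vacuous when $k=0$ and the second when $k=n$.

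The algorithm then loops over $k=0,1,\dots,n$. For a fixed $k$, the vertices of $\VV\setminus B_k$ are forced into $S$ and the vertices outside $A_k$ are forbidden from $S$, so a valid $S$ of size exactly $k$ exists if and only if $\VV\setminus B_k\subseteq A_k$ and $|\VV\setminus B_k|\le k\le |A_k|$. This is an $O(n)$ test for each $k$. If some $k$ passes, we report that a PSNE exists (and can output any $S$ satisfying the inclusions and the size bound); otherwise we report that no PSNE exists. Correctness is immediate from the characterisation above, and the total running time is $O(n^3)$.

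I do not foresee a serious obstacle; the crux is the structural identity $x_v+n_v=k$ for all $v$, which collapses the altruistic term $a\sum_{u\in N_v}g_u(x_u+n_u)$ into a quantity that is constant over all profiles with a given number of investors. The only points requiring a little care are (i) tracking how a single unilateral deviation changes the arguments of the altruistic neighbours --- since $\GG$ is complete, a deviating $v$ is a neighbour of every $u\in N_v$, so all these arguments move in lockstep from $k$ to $k\pm 1$ --- and (ii) the boundary values $k=0$ and $k=n$, where one of the two families of constraints is empty. This argument mirrors, and lifts to the altruistic setting, the approach used for complete graphs in \cite{yu2020computing,maiti2020parameterized}.
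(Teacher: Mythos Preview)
Your proposal is correct and follows essentially the same approach as the paper: your sets $A_k$ and $B_k$ coincide with the paper's $\RR_1(k)$ and $\RR_0(k)$, and your feasibility test $\VV\setminus B_k\subseteq A_k$ together with $|\VV\setminus B_k|\le k\le |A_k|$ is exactly equivalent to the paper's three conditions $|\RR_1(k)|\ge k$, $|\RR_0(k)|\ge |\VV|-k$, $|\RR_0(k)\setminus\RR_1(k)|=|\VV\setminus\RR_1(k)|$. The only cosmetic difference is that the paper treats the all-zeros and all-ones profiles by a separate direct check rather than folding them into the loop via vacuous constraints.
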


The problem of deciding the existence of BNPG game with altruism where the atruistic network is empty is known to be NP-Complete\cite{yu2020computing}. Therefore, we look at the deciding the complexity of finding an $\varepsilon$-Nash equilibrium in BNPG game with symmetric altruism. We show that it is \PPAD-Hard. Towards that, we reduce from an instance of Directed public goods game which is known to be \PPAD-hard \cite{papadimitriou2021public}. In directed public goods game, we are given a directed network of players and the utility $U_u^\pr(x_v,x_{-v})$ of a player $v$ is $Y(x_v+n_v^{in})-p\cdot x_v$. Here $x_v\in\{0,1\}$, $n_v^{in}$ is the number of in-neighbours of $v$ playing $1$ and $Y(x)=0$ if $x=0$ and $Y(x)=1$ if $x>0$.

\begin{theorem}\label{thm:4}
Finding an $\varepsilon$-Nash equilibrium of the BNPG game with symmetric altruism is \PPAD-hard, for some constant $\varepsilon > 0$.
\end{theorem}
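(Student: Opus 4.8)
The plan is to reduce from the directed public goods game, which is \PPAD-hard by~\cite{papadimitriou2021public}. Given a directed public goods instance (a directed graph on a vertex set $V$ together with a constant $p\in(0,1)$), I would construct in polynomial time a BNPG game with symmetric altruism on an undirected input network $\GG$ and an undirected altruism network $\HH$, together with a polynomial-time map that decodes any $\varepsilon$-Nash equilibrium of the constructed game --- for a suitable absolute constant $\varepsilon>0$ --- into a Nash equilibrium of the given directed instance. Since a directed public goods instance always has an equilibrium and the hardness is in computing one, this suffices. The players of $V$ appear in $\GG$, augmented by auxiliary players; decoding just reads off the marginal strategies of the players of $V$.

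The crux is emulating a \emph{directed} edge $(u,v)$ using only undirected edges and undirected altruism. The device I would use is a one-way channel: an auxiliary player $w_{uv}$ that is made $\GG$-adjacent to both $u$ and $v$, is given an altruism edge to $v$ \emph{only}, carries a step-like externality function, and has a cost large enough (a fixed constant) that $w_{uv}$ plays $0$ in every $\varepsilon$-equilibrium. Since $w_{uv}\in N_v$, the term $g_{w_{uv}}(x_{w_{uv}}+n_{w_{uv}})=g_{w_{uv}}(x_u+x_v)$ enters $U_v$, so $v$'s incentives track $x_u$; but $w_{uv}\notin N_u$ and $u$ and $v$ are not $\GG$-adjacent, so $U_u$ does not depend on $x_v$ at all --- the channel is genuinely one-directional. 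Taking $g_{w_{uv}}$ to be the step function at $1$ makes $v$'s reward for not investing gain a unit for every in-neighbour that does not invest, which already reproduces the directed-public-goods best responses on pure profiles. To also match the mixed behaviour --- where $v$'s reward for not investing is the probability that \emph{some} in-neighbour invests, i.e.\ a Boolean OR rather than a sum --- I would feed all of $v$'s incoming channels into an OR-aggregation sub-gadget (a few further auxiliary players whose step externality functions convert the number of firing channels into a single indicator bit), and have $v$ respond to that bit through its own step externality function $g_v$, with cost $c_v$ calibrated against $p$; in-degree-zero vertices of $V$ additionally get a constant positive push so their cost may remain non-negative.

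With the gadgets fixed I would complete the argument in two directions. For completeness, any Nash equilibrium of the directed instance lifts to an $\varepsilon$-equilibrium of the BNPG game by placing every auxiliary player at its forced value; the only deviation gains that need bounding are those of the auxiliary players, and these are below $\varepsilon$ by the choice of costs. For soundness, in any $\varepsilon$-equilibrium of the BNPG game the channel and aggregation players are pinned to their intended values --- this is exactly where $\varepsilon$ must be chosen smaller than the fixed gaps hard-wired into the step functions --- so every player of $V$ is forced into precisely the directed-public-goods best response, and its marginal strategies form a Nash equilibrium of the directed instance. The step with by far the most work is the one-way OR-aggregation sub-gadget: realizing a reliable OR of the in-neighbour signals \emph{without} the aggregator feeding any information back to the original players, with no spurious equilibria, and with every residual cross-influence provably below the target constant $\varepsilon$ uniformly over all (mixed) profiles; I expect that getting the aggregation robust under the one-way constraint requires replicating signals so that a downstream count still reflects the OR in the presence of $\varepsilon$-slack.
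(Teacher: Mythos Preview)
Your reduction source and the ``force the auxiliary to play $0$'' idea are exactly right, but you have over-engineered the gadgetry, and the hardest step of your plan --- the one-way OR-aggregation sub-gadget --- is not actually constructed. You correctly note that per-edge channels deliver to $v$ a \emph{sum} of step signals, so that $v$'s marginal gain from investing becomes $\sum_{u} \Pr[x_u=0]$, whereas the directed public goods game requires $\prod_u \Pr[x_u=0]$ at indifference. Closing this gap demands an OR gate that is simultaneously one-directional, robust to $\varepsilon$-slack, and free of spurious equilibria; you only sketch an expectation that signal replication should suffice. As it stands this is a genuine gap: nothing in the proposal certifies that such a gadget exists with uniformly bounded cross-influence.

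The paper sidesteps the entire OR problem with a much simpler construction: instead of one auxiliary per \emph{edge}, use one auxiliary per \emph{vertex}. Each $u\in V$ is split into $u_{\mathrm{in}}$ and $u_{\mathrm{out}}$; the input network $\GG'$ has edges $\{u_{\mathrm{in}},u_{\mathrm{out}}\}$ for every $u$ and $\{u_{\mathrm{out}},v_{\mathrm{in}}\}$ for every directed edge $(u,v)$, while the altruism network contains only $\{u_{\mathrm{in}},u_{\mathrm{out}}\}$. With $g_{u_{\mathrm{out}}}\equiv 0$, $g_{u_{\mathrm{in}}}$ the step function at $1$, $c_{u_{\mathrm{in}}}=1+2\varepsilon$, $c_{u_{\mathrm{out}}}=p$, and $a=1$, every $u_{\mathrm{in}}$ is forced to play $0$ in any $\varepsilon$-equilibrium, and then $U_{u_{\mathrm{out}}}=g_{u_{\mathrm{in}}}(n_{u_{\mathrm{in}}})-p\,x_{u_{\mathrm{out}}}=Y(x_u+n_u^{\mathrm{in}})-p\,x_u$ on the nose. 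Because all of $u$'s in-neighbours feed into the \emph{single} node $u_{\mathrm{in}}$, the step function there computes the OR for free --- no aggregation gadget is needed. Setting $g_{u_{\mathrm{out}}}\equiv 0$ kills any backflow through the $\GG'$-edges $\{u_{\mathrm{out}},v_{\mathrm{in}}\}$. The decoding then simply reads off $\Delta_u:=\Delta_{u_{\mathrm{out}}}$, and one obtains an $\varepsilon$-Nash equilibrium (not an exact one, as you wrote) of the directed instance by a one-line expectation identity.
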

\begin{proof}
Let $(\GG(\VV,\EE),p)$ be an input instance of directed public goods game. Now we create an instance $(\GG^\pr=(\VV^\pr,\EE^\pr),\HH=(\VV^\pr,\EE^\prr),(g_v)_{v\in\VV^\pr},(c_v)_{v\in\VV^\pr},a)$ of BNPG game with symmetric altruism.
\begin{align*}
		\VV^\pr &= \{u_{in}:u\in\VV\}\cup\{u_{out}:u\in\VV\}\\
		\EE^\pr &= \{\{u_{in},u_{out}\}: u\in\VV^\pr\}\cup \{\{u_{out},v_{in}\}: (u,v)\in\EE\}\\
		\EE^\prr &= \{\{u_{in},u_{out}\}: u\in\VV^\pr\}
\end{align*}
Let the constant $a$ be $1$. $\forall u\in\VV$, $c_{u_{in}}=1+2\varepsilon$ and $c_{u_{out}}=p$. Now define the functions $g_w(.)$ as follows:
\[
	\forall u\in\VV, g_{u_{in}}(x) = \begin{cases}
	1 & x>0\\
	0 & \text{otherwise }
	\end{cases}\]
\[
	\forall u\in\VV, g_{u_{out}}(x) = 0\text{ } \forall x\geq 0\]
Now we show that given any $\varepsilon$-Nash equilibrium of the BNPG game with altruism ,  we can find an $\varepsilon$-Nash equilibrium of the directed public goods game in polynomial time. Let $(\Delta_u)_{u\in \VV^\pr}$ be an $\varepsilon$-Nash equilibrium of the BNPG game with symmetric altruism.

For all $v\in\{u_{in}:u\in\VV\}$, we have the following:
\begin{align*}
\mathbb{E}_{x_{-v}\sim\Delta_{-v}}[U_v(0,x_{-v})]-\varepsilon\geq -\varepsilon> -2\varepsilon
= 1- c_{u_{in}}
= \mathbb{E}_{x_{-v}\sim\Delta_{-v}}[U_v(1,x_{-v})]
\end{align*}
Hence $1$ can't be in the support of $\Delta_{v}$. Therefore $\forall u\in \VV$, $\Delta_{u_{in}}(0)=1$.

Now we show that $(\Delta_u)_{u\in\VV}$ is an $\varepsilon$-Nash equilibrium of the directed public goods game where $\Delta_u=\Delta_{u_{out}}$  $\forall u\in \VV$. Now consider a strategy profile $(x_v)_{v\in\VV^\pr}$ such that $\forall u\in \VV$, we have $x_{u_{in}}=0$. Now let $(x_v)_{v\in\VV}$ be a strategy profile such that $\forall u\in \VV$ we have $x_u=x_{u_{out}}$. Then we have the following:
\begin{align*}
U_{v_{out}}(x_{v_{out}},x_{-v_{out}})=g_{v_{in}}(n_{v_{in}})-p\cdot x_{v_{out}}
=Y(x_v+n_v^{in})-p\cdot x_v=U_u^\pr(x_u,x_{-u})
\end{align*}
Using the above equality and the fact that $\forall u\in \VV$, $\Delta_{u_{in}}(0)=1$, we have $\mathbb{E}_{x_{-u}\sim\Delta_{-u}}[U_u^\pr(x_u,x_{-u})]=\mathbb{E}_{x_{-u_{out}}\sim\Delta_{-u_{out}}}[U_{u_{out}}(x_{u_{out}},x_{-u_{out}})]$ where $x_u=x_{u_{out}}$. For all $u\in\VV$, for all  $x^\pr_u\in\{0,1\}$, for all $x_u\in \text{Supp}(\Delta_u)$,  we have the following:
\begin{align*}
\mathbb{E}_{x_{-u}\sim\Delta_{-u}}[U_u^\pr(x_u,x_{-u})]=&\mathbb{E}_{x_{-u_{out}}\sim\Delta_{-u_{out}}}[U_{u_{out}}(x_{u_{out}}=x_u,x_{-u_{out}})]\\
\geq& \mathbb{E}_{x_{-u_{out}}\sim\Delta_{-u_{out}}}[U_{u_{out}}(x_{u_{out}}^\pr=x_u^\pr,x_{-u_{out}})] -\varepsilon\\
\geq&\mathbb{E}_{x_{-u}\sim\Delta_{-u}}[U_u(x_u^\pr,x_{-u})] -\varepsilon
\end{align*}
Hence, given any $\varepsilon$-Nash equilibrium of the BNPG game with symmetric altruism ,  we can find an $\varepsilon$-Nash equilibrium of the directed public goods game in polynomial time. This concludes the proof of this theorem.
\end{proof}

\section{Results for Altruistic Network Modification}
In this section, we present the results for Altruistic Network Modification. First let us call ANM with altruism as heterogeneous ANM with altruism whenever the BNPG game is heterogeneous. Similarly let us call ANM with altruism as fully homogeneous ANM whenever the BNPG game is fully homogeneous. \cite{maiti2020parameterized} depicted a way to reduce heterogeneous BNPG game to fully homogeneous BNPG game. By extending their ideas to our setting, we show \Cref{lem:1,lem:2,lem:3} which will be helpful to prove the theorems on hardness in this section.
\begin{lemma}[$\star$]\label{lem:1}
Given an instance of heterogeneous ANM with asymmetric altruism such that cost $c_v$ of investing is same for all players $v$ in the heterogeneous BNPG game, we can reduce the instance heterogeneous ANM with asymmetric altruism to an instance of fully homogeneous ANM with asymmetric altruism.
\end{lemma}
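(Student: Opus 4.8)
The plan is as follows. Since the cost $c_v=c$ of investing is already the same for every player in the given heterogeneous instance, the only obstacle to full homogeneity is that the players may have different externality functions $g_v(\cdot)$, so it suffices to replace all of them by one common non-decreasing function $g$ while preserving, for every player, its best-response behaviour. Adapting the heterogeneous-to-homogeneous reduction of \cite{maiti2020parameterized} to the ANM setting, the idea is to reserve a dedicated interval of $\NB\cup\{0\}$ for each player, to define $g$ on that interval as a vertically shifted copy of that player's $g_v$, and to attach to each player a bundle of pendant vertices that forces the relevant argument of $g$ into the player's interval. Write $d_v$ for the degree of $v$ in $\GG$ and fix an ordering $v_1,\dots,v_n$ of $\VV$; assign to $v_i$ the interval $I_{v_i}=\{L_{v_i},\dots,L_{v_i}+d_{v_i}+1\}$, where $L_{v_1}=3$ and $L_{v_{i+1}}=L_{v_i}+d_{v_i}+2$, so that the intervals are pairwise disjoint and all lie above $2$.

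I would then define the fully homogeneous instance as follows. Put $g(0)=0$, $g(1)=c$, $g(2)=2c$; on each $I_v$ set $g(L_v+j)=g_v(j)+\kappa_v$ for $j=0,\dots,d_v+1$, where $\kappa_v\in\RB\cup\{0\}$ is chosen greedily (processing $v_1,v_2,\dots$ in order) as small as possible to keep $g$ non-decreasing when entering $I_v$; between consecutive intervals and above the last interval let $g$ be constant. The resulting $g$ is non-decreasing with codomain $\RB\cup\{0\}$, is specified by polynomially many values of polynomially bounded bit-length (recall $g$ is only ever evaluated up to one more than the maximum degree), and satisfies $\Delta g(L_v+j)=\Delta g_v(j)$ for all $j\in\{0,\dots,d_v\}$ as well as $\Delta g(0)=\Delta g(1)=c$. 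Build $\GG'$ from $\GG$ by attaching $L_v$ fresh pendant vertices to each $v\in\VV$, each adjacent only to $v$. Let the initial altruistic network $\HH'$ consist of exactly the edges of $\HH$ (so no edge of $\HH'$ is incident to a pendant), let the target profile $\textbf{x}^{*\prime}$ agree with $\textbf{x}^*$ on $\VV$ and play $1$ on every pendant, give every vertex of $\GG'$ externality function $g$ and cost $c$, keep the altruism constant $a$, set the budget $B'=B$, and define $C'$ to equal $C$ on edges of $\GG$ and to equal $B+1$ on every edge of $\GG'$ incident to a pendant.

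For correctness: no cost-$\le B'$ modification can add a pendant-incident edge, since each costs $B+1>B'$, so in every such modified network a pendant $p$ adjacent to $v$ has empty out-neighbourhood and its utility is $g(x_p+n_p)-c\,x_p$ with $n_p=x^*_v\in\{0,1\}$; because $\Delta g(0)=\Delta g(1)=c$, playing $1$ is a best response for $p$. For an original vertex $v$ and any cost-$\le B'$ modification of $\HH'$ (which, as just noted, involves only edges of $\GG$, so $N_v$ stays within the original vertices), the fact that all pendants play $1$ gives $x_w+n_w=L_w+(x_w+m_w)\in I_w$ for every $w\in\{v\}\cup N_v$, where $m_w$ counts the original neighbours of $w$ playing $1$; substituting $g(x_w+n_w)=g_w(x_w+m_w)+\kappa_w$ yields
\[U_v(x_v,\textbf{x}^{*\prime}_{-v})=\big(g_v(x_v+m_v)+\kappa_v\big)+a\sum_{u\in N_v}\big(g_u(x_u+m_u)+\kappa_u\big)-c\,x_v,\]
which equals $v$'s utility in the corresponding heterogeneous game plus the quantity $\kappa_v+a\sum_{u\in N_v}\kappa_u$, and this quantity does not depend on $x_v$. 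Hence $v$'s best-response condition coincides in the two games. Therefore a cost-$\le B$ modification of $\HH$ makes $\textbf{x}^*$ a PSNE iff the verbatim-same modification of $\HH'$ makes $\textbf{x}^{*\prime}$ a PSNE; in the backward direction one first observes that any cost-$\le B'$ modification of $\HH'$ touches no pendant edge and so restricts to a cost-$\le B$ modification of $\HH$.

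The main obstacle is to arrange matters so that a single function $g$ is non-decreasing over its whole domain while each shifted copy still reproduces the correct marginals $\Delta g_v$ on exactly the range of arguments $v$ can attain, and simultaneously so that the pendant gadgets are robust to arbitrary altruistic-network modifications; these are handled, respectively, by the greedy choice of the offsets $L_v$ and the vertical shifts $\kappa_v$ together with the large initial jumps of $g$, and by the prohibitive cost $B+1$ on pendant-incident edges. Everything else — bounding the number of new vertices and the bit-length of the values of $g$ by a polynomial, and checking the easy direction of the equivalence — is routine.
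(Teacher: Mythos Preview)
Your proposal is correct and follows essentially the same approach as the paper: attach pendant vertices to each original player so as to shift the argument of a single common externality function $g$ into a reserved interval on which $g$ reproduces the marginals $\Delta g_v$, set $\Delta g(0)=\Delta g(1)=c$ so pendants never deviate, and price pendant-incident altruistic edges at $B+1$ to freeze the gadget. The only cosmetic differences are that the paper uses a uniform interval length $n$ (adding $2+n(i-1)$ pendants to $v_i$) and defines $g$ recursively via the marginals, whereas you use degree-dependent interval lengths $d_{v_i}+2$ and define $g$ via explicit vertical shifts $\kappa_v$; these are equivalent implementations of the same idea.
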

\begin{lemma}[$\star$]\label{lem:2}
Given an instance of heterogeneous ANM with symmetric altruism such that cost $c_v$ of investing is same for all players $v$ in the heterogeneous BNPG game, we can reduce the instance heterogeneous ANM with symmetric altruism to an instance of fully homogeneous ANM with symmetric altruism.
\end{lemma}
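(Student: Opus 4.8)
The plan is to prove \Cref{lem:2} in exact parallel with \Cref{lem:1}, taking the reduction of \cite{maiti2020parameterized} from a heterogeneous BNPG game to a fully homogeneous one as a black box and checking that nothing breaks when the altruistic network is undirected. Recall that this reduction, applied to a heterogeneous game whose investment cost $c$ is already uniform, attaches to every player $v$ a small gadget of fresh vertices (together with the edges joining them to $v$ and among themselves) and outputs a single non-decreasing externality function $g$ and the single cost $c$, with the following properties: the windows $W_v=[s_v,\,s_v+d_v+1]$ are pairwise disjoint, $g$ restricted to $W_v$ equals $g_v(\cdot-s_v)$ up to an additive constant $K_v$, $g$ is extended monotonically between the windows and on a dedicated ``gadget region'', and the values of $g$ on the gadget region are picked so that each gadget vertex --- all of whose neighbours in the new input network are pinned except possibly the single original vertex it hangs off --- has a dominant action equal to its designated strategy. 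Thus, once the gadget vertices are fixed to their designated strategies, the argument $x_v+n_v$ of each original player $v$ (here $n_v$ is, as in the original game, the number of $\GG$-neighbours of $v$ playing $1$) is simply shifted by the fixed amount $s_v$.

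First I would assemble the homogenised ANM instance. From $(\GG,\HH,(g_v)_{v\in\VV},(c)_{v\in\VV},a,C,B,\textbf{x}^*)$ with $\HH$ undirected, let $\GG'$ be the homogenised input network, let $\HH'$ have exactly the edges of $\HH$ --- so every gadget vertex is \emph{isolated} in $\HH'$, the point being that with symmetric altruism one is simply never obliged to place an altruistic edge on a gadget vertex --- let $\textbf{x}'$ extend $\textbf{x}^*$ by the designated value on every gadget vertex, let $C'$ agree with $C$ on the original edges and assign $B+1$ to every edge of $\GG'$ incident to a gadget vertex, and keep the budget $B$; this is clearly a polynomial-time construction. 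The key point is a utility correspondence: since the gadget vertices are isolated in $\HH'$, for every original player $v$ the altruistic neighbourhood $N_v$ is unchanged, the plain term becomes $g(x_v+n_v+s_v)=g_v(x_v+n_v)+K_v$, and the altruistic term becomes $a\sum_{u\in N_v}g(x_u+n_u+s_u)=a\sum_{u\in N_v}\big(g_u(x_u+n_u)+K_u\big)$, so the homogenised utility of $v$ equals its original utility plus a constant independent of all strategies; and every gadget vertex is best responding under $\textbf{x}'$ regardless of the other players, by its dominant action.

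Given the correspondence, the equivalence is short. Forward: a set of modifications of $\HH$ of cost at most $B$ that makes $\textbf{x}^*$ a PSNE touches only original edges, hence is verbatim a legal, same-cost set of modifications of $\HH'$; by the correspondence every original player is best responding in the modified homogenised game, and the gadget vertices already are, so $\textbf{x}'$ is a PSNE. Backward: a within-budget modification of $\HH'$ cannot touch a gadget-incident edge (each costs $B+1>B$), hence it is a legal modification of $\HH$, and by the correspondence it makes $\textbf{x}^*$ a PSNE of the original game. Therefore the original instance is a yes-instance iff the homogenised one is.

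The only place that genuinely needs re-checking against \Cref{lem:1}, and the step I expect to be the main obstacle, is confirming that the two roles of directedness in the asymmetric proof survive. Both do: the dominance of each gadget vertex's designated strategy is witnessed purely through its own BNPG externality term, never through altruistic edges, so leaving gadget vertices isolated in the undirected $\HH'$ is harmless; and the undirected gadget edges of $\GG'$ cannot be exploited by a within-budget modification because of the $B+1$ cost wall. Everything else --- the explicit $g$, the shifts $s_v$, the window disjointness, and the monotone filling --- is inherited unchanged from \cite{maiti2020parameterized} (and from the proof of \Cref{lem:1}), so no new computation is needed.
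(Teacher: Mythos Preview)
Your proposal is correct and follows essentially the same route as the paper: attach pendant gadget vertices to each original player to shift its argument into a private window of a single externality function $g$, keep the altruistic network on the original vertices only, price every gadget-incident edge at $B{+}1$, and then verify that the best-response condition of each original player is preserved (while gadget vertices never deviate from their designated strategy $1$). The only cosmetic difference is that the paper spells out the explicit construction ($|V_i|=2+n(i-1)$ pendants per player and the recursive definition of $g$) rather than invoking \cite{maiti2020parameterized} as a black box; one small wording point is that in the concrete construction the gadget vertices are merely indifferent (weakly best responding), not strictly dominant, but this does not affect your argument.
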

\begin{lemma}[$\star$]\label{lem:3}
Given an instance of heterogeneous ANM with symmetric altruism such that input network has maximum degree 3, cost $c_v$ of investing is same for all players $v$ in the heterogeneous BNPG game and there are three types of externality functions, we can reduce the instance heterogeneous ANM with symmetric altruism to an instance of fully homogeneous ANM with symmetric altruism such that the input network has maximum degree 13.
\end{lemma}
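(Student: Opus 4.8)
The plan is to mirror the heterogeneous-to-fully-homogeneous transformation of \cite{maiti2020parameterized} that already underlies \Cref{lem:2}, but to carry it out economically so that the blow-up in vertex degree is bounded. Recall the shape of such a transformation: it replaces the three distinct externality functions $g^{(1)},g^{(2)},g^{(3)}$ by a single universal function $g$, and attaches to each original vertex a small gadget of fresh vertices whose purpose is twofold --- to be pinned to the strategy $1$ in the relevant profiles, and to shift that vertex's argument $x_v+n_v$ by a type-dependent offset, so that $g$ restricted to the appropriate window reproduces the vertex's own externality function (up to an additive constant, which only shifts the altruism contributions of that vertex's neighbours by a constant and so is harmless). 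The reason the degree stays bounded is that the input network has maximum degree $3$: the argument $x_v+n_v$ of an original vertex then ranges over only $5$ consecutive integers once the offset is fixed.

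Concretely I would carve the domain of $g$ into three length-$5$ windows $W_1=[0,4]$, $W_2=[5,9]$, $W_3=[10,14]$, with offsets $0,5,10$, and set $g$ on $W_t$ to agree with $g^{(t)}$ up to an additive constant chosen large enough that $g$ is non-decreasing and, crucially, that the increments $\Delta g(4)$ and $\Delta g(9)$ at the window boundaries exceed the fixed cost $c$ by an arbitrarily large margin. A type-$t$ vertex $v$ receives a gadget of $5(t-1)$ ``pin'' vertices, all joined to $v$, interconnected (among themselves and with a constant number of helper vertices) so that every gadget vertex, when it and all its gadget-neighbours play $1$, has argument exactly $5$ or $10$; the gain such a vertex gets from playing $1$ rather than $0$ is then $\Delta g(4)$ or $\Delta g(9)$, so it has a forced best response of $1$ in the (extended) target profile regardless of the fixed bit $x_v^*$ --- the internal wiring of the gadget is chosen as a function of $x_v^*$ to make this arithmetic work out. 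Type-$1$ vertices keep degree $\le 3$, type-$2$ vertices reach degree $\le 3+5=8$, type-$3$ vertices reach degree $3+10=13$, and the gadget vertices themselves can be wired with degree at most $13$; hence the new input network has maximum degree $13$.

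For the ANM reduction itself, I would extend the target profile $\mathbf x^*$ by setting every gadget vertex to $1$, initialize the altruism network on the gadget edges to the empty configuration, and assign every edge incident to a gadget vertex a modification cost of $B+1$, so that no modification of cost $\le B$ ever touches a gadget. It then remains to verify the equivalence. In the backward direction, a feasible modification making the extended $\mathbf x^*$ a PSNE of the new instance uses only original-vertex edges; because on window $W_t$ the increments $\Delta g$ coincide with $\Delta g^{(t)}$, and because the pin vertices (always playing $1$, never altruism-neighbours of any original vertex) contribute nothing to any original vertex's externality-argument or altruism sum under a deviation, that same modification makes $\mathbf x^*$ a PSNE of the original instance. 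The forward direction is analogous, additionally using that every gadget vertex is in a forced best response in the extended target profile.

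The main obstacle is the gadget design of the second paragraph: one must exhibit, for each of the finitely many pairs $(t,b)$ with $t\in\{1,2,3\}$ and $b=x_v^*\in\{0,1\}$, a bounded-degree gadget that (a) supplies $v$ with exactly the offset $5(t-1)$ when $v$ plays $b$ and all gadget vertices play $1$, (b) places every one of its vertices at argument $5$ or $10$ --- the only two values at which $\Delta g$ can be made very large --- so that each is pinned to $1$, and (c) does not push $v$'s degree past $13$, which is a tight budget since the three windows must be packed contiguously with offsets $0,5,10$. A secondary point, immediate once the gadgets are fixed, is that no permitted altruism-edge modification can push any vertex's argument out of its intended window, because the offset-providing neighbours are immutable gadget vertices that always play $1$.
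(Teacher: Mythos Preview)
Your plan follows the same windowing idea as the paper, but differs in how the auxiliary ``offset'' vertices are pinned to $1$, and this is exactly where you leave a gap. The paper sidesteps your main obstacle entirely with a much simpler device: it reserves the two lowest increments for pendants by setting $g(x)=cx$ for $x\in\{0,1,2\}$, so that $\Delta g(0)=\Delta g(1)=c$. Each original vertex of type $t$ then receives $2+4(t-1)$ \emph{degree-one} pendant neighbours (offsets $2,6,10$ rather than your $0,5,10$); a pendant has $n_p=x_v^*\in\{0,1\}$ regardless of the target bit, so $\Delta g(n_p)=c$ and the pendant is a best-responder at $1$ with no gadget design needed. Type-$3$ vertices reach degree $3+10=13$ and every pendant has degree $1$, giving the bound immediately. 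The altruism-network part, cost assignment, and equivalence argument are exactly as in your last paragraph.

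Your alternative of pinning at the window boundaries $\Delta g(4),\Delta g(9)$ is workable in principle --- for instance $(t,x_v^*)=(3,1)$ can use any $3$-regular graph on the ten pins, $(3,0)$ any $4$-regular graph on ten vertices, $(2,0)$ a copy of $K_5$ on the five pins, and only $(2,1)$ needs a handful of helpers because five vertices of degree $3$ have odd degree sum --- but you have not actually exhibited these gadgets, and the proposal is incomplete at precisely the point you yourself flag as the main obstacle. The paper's pendant trick removes the obstacle rather than solving it.
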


ANM with asymmetric altruism is known to be \NPC when the input network is a clique \cite{ijcai2021-69}. We show a similar result for trees by reducing from Knapsack problem.
\begin{theorem}[$\star$]\label{thm:5}
For the target profile where all players invest, ANM with asymmetric altruism is \NPC when the input network is a tree and the BNPG game is fully homogeneous.
\end{theorem}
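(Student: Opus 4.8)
The plan is to reduce from the classical \kp problem: given items with weights $w_1,\dots,w_m$, values $p_1,\dots,p_m$, a weight bound $W$ and a target value $P$, decide whether some subset of items has total weight at most $W$ and total value at least $P$. I will build a fully homogeneous BNPG game on a tree together with an ANM instance whose target profile $\mathbf{x}^*$ is the all-ones profile, so that the edge modifications we are allowed to perform correspond exactly to "packing" items, the budget $B$ encodes the weight bound $W$, and the requirement that $\mathbf{x}^*$ becomes a PSNE encodes the value threshold $P$. Since ANM is clearly in \NP (guess the set of edges added/deleted, check the cost is within budget and that the resulting profile is a PSNE), the work is all in the hardness direction.

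The first step is to design the gadgetry. For each item $i$ I would attach a small rooted subtree (a "pendant" hanging off a central vertex, or a long path, depending on what the utility functions force) whose structure is such that: in the \emph{initial} altruistic network $\HH$, some designated vertex $v_i$ of this gadget is \emph{not} best-responding under the all-ones profile, and the \emph{only} cheap way to fix $v_i$ is to add (or delete) one particular edge $e_i$ of cost $w_i$; doing so not only repairs $v_i$ but also, through the altruistic term $a\sum_{u\in N_v} g_u(\cdot)$, contributes a "profit" of $p_i$ towards repairing a separate global "collector" vertex whose deviation-incentive is exactly $P$ (or $\sum_i p_i - P$, depending on the sign). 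The key design freedom here is that in a fully homogeneous game all vertices share one externality function $g$ and one cost $c$, so the different "weights" and "values" must come purely from the tree topology (degrees, distances, how many investing neighbours each vertex sees) and from the cost function $C(\cdot)$ on edges — which is unconstrained and non-negative, so it is the natural place to encode $w_i$. The values $p_i$ I expect to encode via the number of leaves hanging off a vertex, using a $g$ that is, say, piecewise linear with carefully chosen increments $\Delta g$, so that $\Delta g$ at the relevant argument equals $p_i$.

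Once the gadgets are in place, the correctness argument splits into the two standard directions. For the forward direction, given a solution subset $S$ to the \kp instance, add the edges $\{e_i : i\in S\}$; the total cost is $\sum_{i\in S} w_i \le W = B$, every item-gadget vertex $v_i$ for $i \in S$ is repaired by construction, every $v_i$ for $i\notin S$ was already fine (or is made fine "for free"), and the collector vertex receives altruistic profit $\sum_{i\in S} p_i \ge P$, which is exactly the threshold needed for it to not want to deviate; all other vertices are static by design, so $\mathbf{x}^*$ is a PSNE. For the reverse direction, one shows that in any successful modification the set of "useful" operations must be of the form $\{e_i : i \in S\}$ for some $S$ (any other added/deleted edge is either useless or strictly harmful — here I would make the alternative edges prohibitively expensive or make deleting existing edges break some other vertex), so the budget constraint forces $\sum_{i\in S} w_i \le B = W$ and the PSNE condition at the collector forces $\sum_{i\in S} p_i \ge P$.

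The main obstacle I anticipate is the homogeneity constraint: in a \emph{fully homogeneous} game I cannot simply assign each item-vertex its own bespoke externality function, so I must realize $m$ different "weight/value" pairs with a single $g$ and a single $c$ by varying only the tree structure. Concretely, I need, for each $i$, a vertex that is "off by exactly one unit of incentive" under the all-ones profile, repairable by exactly one edge, while simultaneously the marginal altruistic contribution of that repair equals $p_i$ — and all of this must be consistent with one fixed $g$ whose difference sequence $\Delta g(0), \Delta g(1), \dots$ is specified in advance. Getting the numerics to line up (choosing $a$, $c$, and the breakpoints of $g$, and the exact degrees/path-lengths in each gadget) is the delicate part; Lemmas~\ref{lem:1} and~\ref{lem:2} may help by letting me first build a \emph{heterogeneous} tree instance with common investing cost $c_v = c$ and then invoke the reduction to full homogeneity, so in the write-up I would most likely construct the easier heterogeneous instance and appeal to \Cref{lem:1} for the final step. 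Everything else — membership in \NP, the budget bookkeeping, and the PSNE checks at the static vertices — is routine.
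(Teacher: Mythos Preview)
Your proposal is correct and follows essentially the same route as the paper: reduce from \kp, build a \emph{heterogeneous} tree instance with a common investing cost $c$ (a central ``collector'' vertex plus one small gadget per item, weights $w_i$ encoded as edge costs in $C(\cdot)$, values $p_i$ encoded as the altruistic marginal the collector receives), and then invoke \Cref{lem:1} to pass to a fully homogeneous instance while preserving the tree structure. The paper's concrete gadget is a touch simpler than your sketch---the expensive edge $(u_{2n+1},u_i)$ only serves the collector, while each item vertex $u_i$ is repaired separately by a free edge $(u_i,u_{n+i})$ to a leaf with $\Delta g_{u_{n+i}}=P$---but this is a cosmetic difference, not a different idea.
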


ANM with symmetric altruism is known to be \NPC when the input network is a clique \cite{ijcai2021-69}. We show a similar result for trees by reducing from Knapsack problem.

\begin{theorem}[$\star$]\label{thm:6}
For the target profile where all players invest, ANM with symmetric altruism is \NPC when the input network is a tree and the BNPG game is fully homogeneous.
\end{theorem}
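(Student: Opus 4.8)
The plan is to reduce from the decision version of \kp: given items $1,\dots,m$ with non-negative integer weights $w_1,\dots,w_m$ and values $v_1,\dots,v_m$, a capacity $W$ and a target $V$, decide whether some $S\subseteq[m]$ has $\sum_{i\in S}w_i\le W$ and $\sum_{i\in S}v_i\ge V$. Membership in \NP is immediate: a certificate is the list of edge additions/deletions, and after applying it one checks in polynomial time whether the all-ones profile $\textbf{x}^*$ is a PSNE. So the content is the hardness reduction.

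The first step is to record what ``$\textbf{x}^*$ is a PSNE'' means after an edge modification producing an altruistic network \HH. Since every player invests, $n_v=d_v$ for the degree $d_v$ of $v$ in \GG, and the only nontrivial inequality is $U_v(1,\textbf{x}^*_{-v})\ge U_v(0,\textbf{x}^*_{-v})$, which after cancellation becomes
\[ \Delta g(d_v)+a\sum_{u\in N_v}\Delta g(d_u)\ \ge\ c, \]
where $N_v$ is the neighbourhood of $v$ in the modified \HH (a subset of the $\GG$-neighbours of $v$) and $\Delta g\ge 0$ because $g$ is non-decreasing. Thus the task is to choose, within budget, a subgraph of \GG as \HH so that every vertex meets this ``slack'' requirement.

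For the gadget I would take a spider: a hub $r$ adjacent to item vertices $u_1,\dots,u_m$, and to each $u_i$ I attach a private star of $k_i$ pendant leaves, with the $k_i$ pairwise distinct and chosen so that the degrees $1$, $d_r=m$ and $d_{u_i}=1+k_i$ ($i\in[m]$) are all pairwise distinct; the total size stays polynomial. Let $a=1$, let the (uniform) cost of investing $c$ be large enough that $c\ge V$, and let $M\ge c$ be huge. I define the single externality function $g$ through its increments: $\Delta g(d_{u_i})=v_i$ for every $i$, $\Delta g(m)=c-V$, and $\Delta g(x)=M$ for every other argument $x$ (in particular $\Delta g(1)=M$); this is a valid non-decreasing function with a polynomial-size table. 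The initial \HH contains, for each $i$, a single edge from $u_i$ to one of its leaves. The cost function sets $C(\{r,u_i\})=w_i$, sets the cost of every other edge that could be added to or deleted from \HH above $B$, and $B=W$. Then every leaf $\ell$ has $d_\ell=1$, so its slack term alone is $\Delta g(1)=M\ge c$; every item vertex $u_i$ already receives $a\,\Delta g(1)=M\ge c$ from its preinstalled leaf-edge, and adding $\{r,u_i\}$ only contributes the further non-negative term $a\,\Delta g(m)$, so $u_i$ is always satisfied; and the hub $r$ meets its requirement iff $\Delta g(m)+\sum_{i\in S}\Delta g(d_{u_i})\ge c$, i.e.\ $\sum_{i\in S}v_i\ge V$, where $S=\{\,i:\{r,u_i\}\ \text{was added}\,\}$, while the budget used is exactly $\sum_{i\in S}w_i\le W$. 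Hence $\textbf{x}^*$ can be made a PSNE within budget iff the \kp instance is a yes-instance. Because the altruistic network is undirected, the one place that needs care is that adding $\{r,u_i\}$ also enlarges $N_{u_i}$; but by the displayed inequality this only adds the non-negative quantity $a(c-V)$, so it never breaks $u_i$'s requirement. As an alternative that matches the helper lemmas of this section, one can instead encode $v_i$ through per-vertex externality functions $g_{u_i}$ on a depth-two spider with uniform cost $c=V$, prove the same \kp-equivalence, and then apply \Cref{lem:2} to obtain a fully homogeneous instance, provided the reduction of \Cref{lem:2} preserves treeness (which it does, as its degree-padding gadgets attach by single edges).

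I expect the main obstacle to be making the ``everything but the hub is automatically fine'' claim airtight: one must verify that for \emph{every} budget-respecting modification (including deletions, and including the symmetric side-effects of each added edge $\{r,u_i\}$ on both of its endpoints) no vertex other than $r$ can violate its slack inequality, so that feasibility collapses exactly onto the hub's condition together with the budget. Subordinate to this is the bookkeeping ensuring that a single non-decreasing externality function realizes the distinct increments $v_i$ at the distinct degrees $d_{u_i}$ while simultaneously realizing the large increment $M$ at degree $1$ and the increment $c-V$ at degree $m$, with no degree collision.
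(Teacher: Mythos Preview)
Your proposal is correct and follows essentially the same route as the paper: both reduce from \kp via a depth-two spider whose hub's PSNE constraint encodes the profit threshold while the hub-to-item edge costs encode the weights. The only packaging difference is that the paper first builds a heterogeneous instance with uniform investment cost and then invokes \Cref{lem:2} to homogenize (that lemma performs precisely the degree-padding trick you carry out by hand), whereas you inline the padding directly---and you explicitly note the \Cref{lem:2} alternative yourself.
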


We now show that ANM with symmetric altruism is known to be \PNPH for the parameter maximum degree of the input network even when the BNPG game is fully homogeneous. Towards that, we reduce from an instance of \tsat which is known to be \NPC \cite{berman2004approximation}. \tsat is the special case of 3-SAT where each variable $x_i$ occurs exactly twice as negative literal $\bar{x}_i$ and twice as positive literal $x_i$.

\begin{theorem}[$\star$]\label{thm:7}
For the target profile where all players invest, ANM with symmetric altruism is known to be \PNPH for the parameter maximum degree of the input network even when the BNPG game is fully homogeneous.
\end{theorem}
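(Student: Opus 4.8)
The plan is to reduce from \tsat. I would start from the following reformulation of the target condition: when the target profile $\textbf{x}^*$ is the one in which every player invests, $\textbf{x}^*$ is a PSNE of the BNPG game with symmetric altruism on altruistic network $\HH$ if and only if, for every vertex $v$,
\[ \Delta g_v(d_v) + a\sum_{u\in N_v(\HH)}\Delta g_u(d_u) \ge c_v, \]
where $d_v$ is the degree of $v$ in the input network $\GG$ and $N_v(\HH)$ is the neighbourhood of $v$ in $\HH$ (compare $U_v$ at $\textbf{x}^*$ against the only available deviation $x_v=0$). Two features of this inequality shape the construction. First, since every $\Delta g_u$ is non-negative, adding an altruism edge only ever helps, so the only lever that can force a choice is the budget: all the combinatorial content of \tsat must be pushed into which edge additions are \emph{mandatory} versus \emph{optional} and into their costs. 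Second, by \Cref{lem:3} it suffices to build a \emph{heterogeneous} ANM instance with symmetric altruism in which the input network has maximum degree $3$, the investment cost is uniform, and at most three externality-function types occur; \Cref{lem:3} then converts it into an equivalent \emph{fully homogeneous} instance whose input network still has bounded maximum degree ($13$), which is exactly what \PNPH asks for.

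So, given a \tsat formula $\varphi$ with variables $x_1,\dots,x_n$ (each occurring exactly twice positively and twice negatively) and clauses $C_1,\dots,C_m$, I would construct a bounded-degree heterogeneous instance with the all-invest target and \emph{empty} initial altruistic network, so every modification is an addition with its prescribed cost. For each variable $x_i$ there is a variable gadget containing a positive port vertex for each positive occurrence, a negative port vertex for each negative occurrence, and a constant-size hub sub-gadget (two hub vertices joined by an edge, one attached to the two positive ports and one to the two negative ports, so that all degrees stay at most $3$); using the freedom that $\Delta g_v(d_v)$ may be any non-negative number in the heterogeneous model, the externality functions and degrees are tuned so that there are exactly two cost-minimum ways to make all gadget vertices satisfy the inequality above --- a ``true'' configuration in which the positive ports are served through the hub and left with spare altruistic capacity while the negative ports are not, and a symmetric ``false'' configuration. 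For each clause $C_j$ there is a clause vertex adjacent in $\GG$ to the three port vertices of its three literals, with its $\Delta g$ value set so that its inequality can be met only if it receives altruistic help (one extra edge) from at least one incident port, and a port can afford to supply that help, within the global budget, exactly when the corresponding literal is set true by its variable gadget's configuration. Finally I would fix the budget $B$ to be precisely the total cost of choosing one of the two cost-minimum configurations in each variable gadget plus one help edge per clause.

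For correctness, if $\varphi$ is satisfiable then I pick a satisfying assignment, use the matching configuration in every variable gadget, and for each clause buy the help edge from one of its true literals; by construction this uses cost exactly $B$ and makes every vertex's inequality hold, so the all-invest profile is a PSNE. Conversely, from a modification of cost at most $B$ I read off, in each variable gadget, which configuration is in force; the budget is tight enough that no gadget can afford an assignment-inconsistent or ``idle'' configuration, so this yields a well-defined truth assignment, and since every clause vertex's inequality holds, every clause must contain a literal set true, i.e.\ $\varphi$ is satisfiable. Applying \Cref{lem:3} to this heterogeneous instance then gives an equivalent fully homogeneous ANM instance with symmetric altruism whose input network has maximum degree $13$; as $13$ is a constant, ANM with symmetric altruism for the all-invest target and fully homogeneous games is \NPH already on bounded-degree inputs, hence \PNPH for the maximum-degree parameter.

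The step I expect to be the main obstacle is engineering the variable-gadget wiring together with the externality functions, edge costs, and budget so that the \emph{only} cost-minimum local configurations are the intended ``true'' and ``false'' ones, that any contradictory use of a variable's positive and negative ports is strictly more expensive, and that leaving a variable gadget ``idle'' is also ruled out --- all while staying within maximum degree $3$ and three externality-function types so that \Cref{lem:3} applies. Because the PSNE inequality is monotone under adding altruism edges, there is no local obstruction to exploit; the entire reduction lives in this global cost/budget accounting, which is simultaneously the crux and the most delicate part of the argument.
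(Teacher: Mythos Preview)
Your plan is essentially the paper's own approach: reduce from \tsat to a heterogeneous symmetric-ANM instance with all-invest target, empty initial altruistic network, uniform investment cost, maximum degree~$3$, and three externality-function types, then invoke \Cref{lem:3} to obtain a fully homogeneous instance of bounded degree. The only notable difference is that the paper's variable gadget is considerably simpler than the one you sketch: it uses just three vertices $z_i,\bar z_i,b_i$ per variable (the literal vertices $z_i,\bar z_i$ double as the ``ports'' and are adjacent in $\GG$ directly to their two clause vertices and to $b_i$), with linear externalities $g_{y_j}(x)=220x$, $g_{z_i}(x)=g_{\bar z_i}(x)=200x$, $g_{b_i}(x)=240x$, $c_v=315$, $a=\tfrac12$, edge-costs $1$ for clause--literal edges and $2n+1$ for $b_i$--literal edges, and budget $B=n(2n+3)$; the large cost on $b_i$-edges forces exactly one of $\{z_i,b_i\},\{\bar z_i,b_i\}$ per variable, and the remaining budget forces the other literal to buy both its clause edges, which is precisely the cost/budget accounting you correctly identified as the crux.
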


We complement the previous result by showing that ANM with asymmetric altruism is \FPT for the parameter maximum degree of the input graph.
\begin{theorem}\label{thm:8}
For any target profile, ANM with asymmetric altruism can be solved in time $2^{\Delta/2}\cdot n^{O(1)}$ where $\Delta$ is the maximum degree of the input graph.
\end{theorem}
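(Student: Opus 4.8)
The plan is to exploit the fact that, once the target profile $\textbf{x}^*$ is fixed, an asymmetric-altruism \textsc{ANM} instance decouples completely over the vertices, and then to solve each vertex's subproblem by a meet-in-the-middle argument — concretely, by reducing it to \textsc{Minimum Knapsack} and running the $O(2^{d/2})$ algorithm (\Cref{algo:2}) on $d\le\Delta$ items.

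\textbf{Decoupling.} Under asymmetric altruism $N_v$ is the set of out-neighbours of $v$ in $\HH$, each such $u$ is a neighbour of $v$ in $\GG$, and editing a directed edge $(v,u)$ of $\HH$ changes only $N_v$; moreover the no-deviation condition of a player $w$ involves only $N_w$, and the directed edges leaving distinct vertices form disjoint sets, so the per-vertex edits are independent and their costs add. Fix $\textbf{x}=\textbf{x}^*$; let $N(v)$ be the set of $\GG$-neighbours of $v$ and $n^*_v=|\{u\in N(v):x^*_u=1\}|$, which depends only on $\GG$ and $\textbf{x}^*$. The only part of $U_v$ that the altruistic network affects is $a\sum_{u\in N_v}g_u(x^*_u+n^*_u)$, and when $v$ flips its bit this term becomes $a\sum_{u\in N_v}g_u(x^*_u+n^*_u\pm1)$ — minus if $x^*_v=1$, plus if $x^*_v=0$ — since $v\in N(u)$ for every $u\in N_v$. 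Expanding $U_v(x^*_v,\textbf{x}^*_{-v})\ge U_v(1-x^*_v,\textbf{x}^*_{-v})$ and cancelling the fixed terms, the condition that $v$ has no profitable deviation reduces to a single inequality: $\sum_{u\in N_v}w^v_u\ge T_v$ if $x^*_v=1$, and $\sum_{u\in N_v}w^v_u\le T_v$ if $x^*_v=0$, where $T_v=c_v-\Delta g_v(n^*_v)$ and the weight of $u$ is $w^v_u=a\,\Delta g_u(x^*_u+n^*_u-1)$ or $w^v_u=a\,\Delta g_u(x^*_u+n^*_u)$ respectively; all these numbers are non-negative (monotonicity of $g_u$ and $a\ge0$; when $x^*_v=1$ one has $n^*_u\ge1$, so the argument of $\Delta g_u$ is valid) and polynomial-time computable. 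Hence $\textbf{x}^*$ can be made a PSNE within budget iff $\sum_{v\in\VV}m_v\le B$, where $m_v$ is the minimum cost of editing the edges $(v,\cdot)$ of $\HH$ so that $v$'s inequality holds, with $m_v=\infty$ if no $N_v\subseteq N(v)$ works.

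\textbf{One vertex.} Fix $v$ and let $d=|N(v)|\le\Delta$. Picking the new $N_v\subseteq N(v)$ means, for each $u\in N(v)$: putting $u$ into $N_v$ costs $\alpha_u$ and contributes $w^v_u$ to the sum, while leaving it out costs $\beta_u$ and contributes $0$, where $\{\alpha_u,\beta_u\}=\{0,C((v,u))\}$ according to whether $(v,u)\in\EE^\pr$. Since $\sum_{u\in N_v}\alpha_u+\sum_{u\notin N_v}\beta_u=\sum_u\beta_u+\sum_{u\in N_v}(\alpha_u-\beta_u)$, computing $m_v$ is exactly \textsc{Minimum Knapsack} on $d$ items with a one-sided weight constraint; feasibility is immediate because every $w^v_u\ge0$ ($N_v=N(v)$ maximizes the sum, $N_v=\emptyset$ minimizes it). We solve it by \Cref{algo:2}: split $N(v)$ into two halves of size at most $\lceil d/2\rceil$, enumerate all subsets of each half together with their total weight and total cost, sort the list for one half by weight and replace each cost by the running minimum over that sorted order, and for every subset of the other half binary-search for the cheapest complementary subset meeting the residual weight requirement. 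This costs $2^{\lceil\Delta/2\rceil}\cdot\mathrm{poly}(\Delta)=O(2^{\Delta/2})\cdot\mathrm{poly}(\Delta)$, the factor coming from an odd $\Delta$ being only a constant.

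\textbf{Combining.} Compute every $m_v$ and output \YES iff all $m_v$ are finite and $\sum_{v\in\VV}m_v\le B$, and \NO otherwise; summing over the $n$ vertices gives total running time $2^{\Delta/2}\cdot n^{O(1)}$. The conceptual heart is the decoupling, which is a short case analysis of the two deviation inequalities; the step that needs genuine care is the \textsc{Minimum Knapsack} routine of \Cref{algo:2}, which must handle both the ``$\ge$'' constraint (when $x^*_v=1$) and the ``$\le$'' constraint (when $x^*_v=0$) as well as the split inclusion/exclusion cost structure, and one has to be slightly careful with the rounding in $\lceil d/2\rceil$ so that the exponent stays $\Delta/2$.
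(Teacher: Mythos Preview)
Your proof is correct, and the decoupling into $n$ independent per-vertex subproblems is exactly what the paper uses (the paper simply cites this step from \cite{ijcai2021-69} rather than re-deriving it as you do). The genuine difference is in how the resulting \textsc{Minimum Knapsack} instance on $d\le\Delta$ items is solved in $2^{d/2}$ time. What you describe---split the items into two halves, enumerate all $2^{\lceil d/2\rceil}$ subsets of each half with their total weight and total cost, sort one list by weight and replace costs by running minima, then binary-search for each subset of the other half---is a \emph{direct} meet-in-the-middle for the minimisation problem. The paper's \Cref{algo:2} is instead an \emph{indirect} route: it binary-searches for the optimal objective value $w^\star$ over the interval $[0,W]$, and at each candidate $w$ invokes the classical $2^{d/2}$ \emph{maximisation} knapsack routine of \cite{fedor2010exact} to test whether profit $\ge P$ is achievable with weight at most $w$; monotonicity of this test in $w$ justifies the binary search. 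Your approach is arguably cleaner---a single enumeration instead of $O(\log W)$ calls, and no implicit integrality assumption on the weights is needed for the outer binary search---whereas the paper's approach has the virtue of reducing to an off-the-shelf subroutine unchanged. One small inconsistency in your write-up: you invoke \Cref{algo:2} but then describe your own direct meet-in-the-middle, which is \emph{not} what \Cref{algo:2} does; either drop the cross-reference or present your procedure as an alternative to it.
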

\begin{proof}
\cite{ijcai2021-69} showed that solving an instance of asymmetric altruistic design is equivalent to solving $n$ different instances of Minimum Knapsack problem and each of these instances have at most $\Delta+1$ items. In Minimum Knapsack problem, we are give a set of items $1,\ldots,k$ with costs $p_1,\ldots p_k$ and weights $w_1,\ldots w_k$. The aim is to find a subset $S$ of items minimizing $\sum_{i\in S}w_i$ subject to the constraint that $\sum_{i\in S}p_i\geq P$. We assume that $\sum_{i\in[k]}p_i\geq P$ otherwise we don't have any feasible solution. Let us denote the optimal value by $OPT$. Let $W:=\sum_{i\in[k]}w_i$. Now consider the following integer linear program which we denote by ILP$_w$:
\begin{align*}
&\text{max} \quad
\sum_{i\in[k]}x_ip_i \\
&\text{s.t.} \quad
\sum_{i\in[k]}x_iw_i \leq w,\; x_i\in \{0,1\} \quad \forall i\in [k]
\end{align*}
The above integer linear program can be solved in time $2^{k/2}\cdot k^{O(1)}$ \cite{fedor2010exact}. Now observe that for all $w\geq OPT$, the optimal value $OPT_w$ of ILP$_w$ is at least $P$. Similarly for all $w< OPT$, the optimal value $OPT_w$ of the ILP$_w$ is less than $P$. Now by performing a binary search for $w$ on the range $[0,W]$ and then solving the above ILP repeatedly, we can compute $OPT$ in time $2^{k/2}\cdot \log W \cdot k^{O(1)}$. See \Cref{algo:2} for more details.

As discussed earlier, an Instance of asymmetric altruistic design is equivalent to solving $n$ different instances of Minimum Knapsack problem and each of these instances have at most $\Delta+1$ items. Hence ANM with asymmetric altruism can be solved in time $2^{\Delta/2}\cdot |x|^{O(1)}$ where $x$ is the input instance of ANM with asymmetric altruism.
\end{proof}

\begin{algorithm}[!htbp]
	\caption{Minimum Knapsack Solver} \label{algo:2}
	\begin{algorithmic}[1]
\STATE $\ell\gets 0$, $r\gets W, w\gets \lfloor\frac{\ell+r}{2}\rfloor$
\WHILE{\textbf{true}}
\STATE Solve ILP$_w$ and ILP$_{w+1}$.
\IF{$OPT_w<P$ and $OPT_{w+1}\geq P$}
\STATE \textbf{return} $w+1$
\ELSIF{$OPT_w<P$ and $OPT_{w+1}< P$}
\STATE $\ell\gets w+1$
\STATE $w\gets  \lfloor\frac{\ell+r}{2}\rfloor$
\ELSE
\STATE $r\gets w$
\STATE $w\gets  \lfloor\frac{\ell+r}{2}\rfloor$
\ENDIF
\ENDWHILE
	\end{algorithmic}
\end{algorithm}

We conclude our work by discussing about the approxibimility of ANM with symmetric altruism. \cite{ijcai2021-69} showed a $2+\varepsilon$ approximation algorithm for ANM with symmetric altruism when the target profile has all players investing. However, for arbitrary target profile they showed that ANM with symmetric altruism is \NPC when the input network is a complete graph and the budget is infinite. We show a similar result for graphs with bounded degree by reducing from \tsat.

\begin{theorem}[$\star$]\label{thm:9}
For an arbitrary target profile, ANM with symmetric altruism is known to be \PNPH for the parameter maximum degree of the input network even when the BNPG game is fully homogeneous and the budget is infinite.
\end{theorem}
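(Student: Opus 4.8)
The plan is to reduce from \tsat, following the template of \Cref{thm:7} but adapting the gadgets to exploit an arbitrary target profile rather than the all-investing one. Since the budget is infinite, the reduced instance is a pure feasibility question: given the input network $\GG$ and a target profile $\mathbf{x}^*$, does there exist a set of altruistic edges $\EE^\pr\subseteq\EE$ such that $\mathbf{x}^*$ is a PSNE of the BNPG game on $(\GG,(\VV,\EE^\pr))$? The source of hardness is the undirectedness of the altruistic network: an edge $\{u,v\}$ simultaneously appears in the stability constraint of $u$ (contributing $a\,\Delta g_v(\cdot)$) and of $v$ (contributing $a\,\Delta g_u(\cdot)$), and a player with $x^*_v=1$ needs \emph{enough} altruistic neighbours for the total marginal altruistic gain to reach $c_v-\Delta g_v(n_v)$, whereas a player with $x^*_v=0$ must have \emph{few enough} so that this quantity stays at most $c_v-\Delta g_v(n_v)$. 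This asymmetry between target-$1$ and target-$0$ vertices is precisely what lets us encode logical constraints.

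First I would build a heterogeneous instance whose input network has maximum degree $3$ and uses only a constant number of externality-function types. For each variable $x_i$ of the formula there is a \emph{selector} vertex $s_i$ with target value chosen so that its PSNE constraint forces exactly one of two incident altruistic edges to be present — one routed towards the (at most two) clause gadgets that contain the literal $x_i$, the other towards those that contain $\bar x_i$ — the chosen edge encoding the truth value of $x_i$. Since each literal occurs at most twice in a \tsat instance, each selector and each propagation path has bounded fan-out, keeping the degree at most $3$. For each clause $C_j=\ell_{j,1}\vee\ell_{j,2}\vee\ell_{j,3}$ there is a \emph{test} vertex $t_j$ whose target value and incident $g$-values are calibrated so that its stability constraint is satisfiable if and only if at least one of its three literal-edges is switched ``on'' (corresponding to a true literal); this realises the OR-check. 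The target-$0$ vertices along the propagation paths tightly forbid every edge that should not be present, so the unbounded budget cannot be used to cheat, while the target-$1$ test vertices force the OR to be genuinely satisfied. One then verifies that some $\EE^\pr$ makes $\mathbf{x}^*$ a PSNE iff the formula is satisfiable.

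Finally, I would invoke the homogenisation machinery of \Cref{lem:3} (or the analogous argument of \cite{maiti2020parameterized}, extended to arbitrary target profiles and to the infinite-budget regime, which only simplifies matters since no cost bookkeeping is needed) to turn this bounded-degree heterogeneous instance into a fully homogeneous one whose input network still has maximum degree at most $13$. Because the maximum degree is then a constant independent of the formula, this shows ANM with symmetric altruism is \NPH for constant values of the parameter ``maximum degree of the input network'', even for fully homogeneous games with infinite budget — i.e.\ it is \PNPH.

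The main obstacle is the clause gadget: realising a robust ``at least one of three'' test purely through undirected altruistic edges, while (i) the input network stays degree-$3$ so that \Cref{lem:3} applies, (ii) the target-$0$ vertices forbid exactly the edges that must be absent so the infinite budget cannot trivialise the instance, and (iii) the value fixed at each selector $s_i$ propagates consistently to every clause gadget containing $x_i$ or $\bar x_i$ without creating a degree-$4$ vertex or a fourth externality-function type. Choosing the numerical values of $g$, $c$, and $a$ so that all these inequalities hold simultaneously — and survive the degree blow-up in \Cref{lem:3} — is where the bulk of the technical work lies.
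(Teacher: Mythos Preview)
Your proposal is correct and follows essentially the same approach as the paper: reduce from \tsat with a bounded-degree heterogeneous construction in which target-$1$ selector and clause vertices require incident altruistic edges while target-$0$ literal vertices forbid them (so the infinite budget cannot trivialise the instance), then invoke \Cref{lem:3} to homogenise. The paper's gadget is in fact simpler than you anticipate --- no propagation paths are needed: literal vertices $z_i,\bar z_i$ (target $0$) connect directly both to a selector $b_i$ (target $1$) and to the relevant clause vertices $y_j$ (target $1$), and the numeric tuning ensures that a literal vertex which carries the edge to $b_i$ cannot also carry an edge to any $y_j$ without wanting to deviate.
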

\section{Conclusion and Future Work}
In this paper, we first studied the problem of deciding the existence of PSNE in BNPG game with altruism. We depicted polynomial time algorithms to decide the existence of PSNE in trees, complete graphs and graphs with bounded  We also that the problem of finding MSNE in BNPG game with altruism is \PPAD-Hard. Next we studied Altruistic Network modification. We showed that ANM with either symmetric or asymmetric altruism is \NPC for trees. We also showed that ANM with symmetric altruism is \PNPH for the parameter maximum degree whereas ANM with asymetric altruism is \FPT for the parameter maximum degree. One important research direction in ANM is to maximize the social welfare while ensuring that the target profile remains a PSNE. Another research direction is to improve the approximation algorithms of \cite{ijcai2021-69} for ANM with asymmetric altruism for trees and graphs with bounded degree. Another interesting future work is to look at other graphical games by considering altruism.

\bibliographystyle{splncs04}
\bibliography{references}

\section{Missing Proofs}

\begin{proof}[Proof of \Cref{thm:2}]
	Let $(\GG=(\VV,\EE),\HH=(\VV,\EE^\pr),(g_v)_{v\in\VV},(c_v)_{v\in\VV},a)$ be an instance of BNPG game with altruism. Let the circuit rank of \GG be $d$. W.l.o.g. let the graph \GG have $1$ connected component. First, let us compute the minimum spanning tree $\TT=(\VV,\EE_1)$ of $\GG$. Let $\EE^\prr:=\EE\setminus\EE_1$. Note that $|\EE^\prr|=d$. Let $\VV^\pr=\{v_1,v_2,\ldots,v_\el\}\subseteq\VV$ be the set of endpoints of the edges in $\EE^\prr$. Note that $|\VV^\pr|=\el \leq 2d$. Let $\EE_2=\{(u,v): \{u,v\}\in \EE_1 \text{ and } (u,v)\in \EE^\pr  \}$. For all $v\in\VV$, let $N_v^\pr$ denote the set of out-neighbours of $v$ in $\HH^\pr=(\VV,\EE_2)$. For every pair of tuples $t=(x_v^\pr)_{v\in\VV^\pr}\in\{0,1\}^\el$ and $s=(n_v^\pr)_{v\in\VV^\pr}\in\{0,1,\ldots,n\}^\el$, we do the following.
	
	\begin{enumerate}[label=(\roman*)]
		\item $\forall v\in\VV^\pr$, let $n_v^t$ be the number of neighbours of $v$ in $\GG^\pr=(\VV,\EE^\prr)$ who play $1$ in the tuple $t$. Now below we define $g^t_v$ and $c^s_v$ for all $v\in\VV$.
		
		\[
		g^t_v(x) =\begin{cases}
			g_v(x+n_v^t)& \text{if } v\in\VV^\pr\\
			g_v(x) & \text{otherwise}
		\end{cases}
		\]
		
		\[
		c^s_v =\begin{cases}
			c_v-a\sum_{u\in N_v\setminus N_v^\pr}\Delta g_u(x_u^\pr+n_u^\pr-1)\\
			\text{if } v\in\VV^\pr,x_v^\pr=1\\
			c_v-a\sum_{u\in N_v\setminus N_v^\pr}\Delta g_u(x_u^\pr+n_u^\pr)\\
			\text{if } v\in\VV^\pr,x_v^\pr=0\\
			c_v \text{ otherwise}
		\end{cases}
		\]
		\item Next we decide if there exists a PSNE $(x_v^\prr)_{v\in\VV}\in\{0,1\}^{\VV}$ for the instance $(\TT, \HH^\pr,(g^t_v)_{v\in\VV}, (c_v^s)_{v\in\VV},a)$ of BNPG game with altruism such that
		\begin{enumerate}
			\item $x_v^\prr=x^\pr_v$ for every $v\in\VV^\pr$
			\item $\forall v\in \VV^\pr$, the number of neighbours of $v$ in \GG whose strategies are $1$ in  $(x_v^\prr)_{v\in\VV}$ is $n_v^\pr$.
		\end{enumerate}
		This can be decided by a polynomial time algorithm due to \Cref{cor:tree}. We return \yes if such a PSNE exists.
	\end{enumerate}
	
	If no such PSNE exists for every choice of pair of tuples $t$ and $s$, then we return \no. The running time of our algorithm is $n^{O(d)}$. Now we show that our algorithm returns correct output for every input instance.
	
	In one direction, let there be a PSNE $\textbf{x}^*=(x_v^*)_{v\in\VV}$ for the instance $(\GG,\HH, (g_v)_{v\in\VV},(c_v)_{v\in\VV},a)$ of BNPG game with altruism. For all $v\in \VV$, let $n_v^*$ be the number of neighbours of $v$ in \GG whose strategies are $1$ in $\textbf{x}^*$. We now show that $\textbf{x}^*$ is a PSNE for the instance $(\TT,\HH^\pr, (g^t_v)_{v\in\VV}, (c_v^s)_{v\in\VV},a)$ of BNPG game with altruism where $t=(x_v^*)_{v\in\VV^\pr}$ and $s=(n_v^*)_{v\in\VV^\pr}$. Let $n_v^{\TT}$ denote the number of neighbors of $v\in\VV$ in $\TT$ whose strategies are $1$ in $\textbf{x}^*$. Due to the definition of $n^t_v$, we have $n^*_v=n^{\TT}_v+n_v^t$ for $v\in \VV^\pr$ and $n^*_v=n^{\TT}_v$ for $v\in\VV\setminus\VV^\pr$. Therefore, we have $\Delta g^t_v(n^{\TT}_v)=\Delta g_v(n^{\TT}_v+n_v^t)=\Delta g_v(n^*_v)$ for $v\in \VV^\pr$ and $\Delta g^t_v(n^{\TT}_v)=\Delta g_v(n^*_v)$ for $v\in\VV\setminus\VV^\pr$. Consider a player $v\in\VV$. If $x_v^*=1$, then $\Delta g_v(n^*_v)+a\sum_{u\in N_v}\Delta g_u(x_u^*+n^*_u-1)\geq c_v$ and hence, we have $\Delta g^t_v(n^{\TT}_v)+a\sum_{u\in N_v^\pr}\Delta g_u^t(x_u^*+n^{\TT}_u-1)\geq c_v^s$. Therefore, $v$ does not deviate from its decision of playing $1$ in $\TT$. Similarly if $x_v^*=0$, then $\Delta g_v(n^*_v)+a\sum_{u\in N_v}\Delta g_u(x_u^*+n^*_u)\leq c_v$ and hence, we have $\Delta g^t_v(n^{\TT}_v)+a\sum_{u\in N_v^\pr}\Delta g_u^t(x_u^*+n^{\TT}_u)\leq c_v^s$. Therefore, $v$ does not deviate from its decision of playing $0$ in $\TT$. Hence, $\textbf{x}^*$ is a PSNE for the instance $(\TT,\HH^\pr, (g^t_v)_{v\in\VV}, (c_v^s)_{v\in\VV},a)$ of BNPG game with altruism where $t=(x_v^*)_{v\in\VV^\pr}$ and $s=(n_v^*)_{v\in\VV^\pr}$. This implies that our algorithm will return \YES.
	
	In the other direction, let our algorithm return \YES. This means for a pair of tuples $t=(x_v^*)_{v\in\VV^\pr}$ and $s=(n_v^*)_{v\in\VV^\pr}$, we have a PSNE $(x_v^*)_{v\in\VV}$ for the instance $(\TT, \HH^\pr, (g^t_v)_{v\in\VV}, (c_v^s)_{v\in\VV},a)$ of BNPG game with altruism. We now show that $(x_v^*)_{v\in\VV}$ is a PSNE for the instance $(\GG,\HH, (g_v)_{v\in\VV},(c_v)_{v\in\VV},a)$ of BNPG game with altruism. Consider a player $v\in \VV$. If $x_v^*=1$ , then $\Delta g^t_v(n^{\TT}_v)+a\sum_{u\in N_v^\pr}\Delta g_u^t(x_u^*+n^{\TT}_u-1)\geq c_v^s$. Therefore, we have $\Delta g_v(n^*_v)+a\sum_{u\in N_v}\Delta g_u(x_u^*+n^*_u-1)\geq c_v$. Hence,  $v$ does not deviate from its decision of playing $1$ in $\GG$.  Similarly, if $x_v^*=0$, then $\Delta g^t_v(n^{\TT}_v)+a\sum_{u\in N_v^\pr}\Delta g_u^t(x_u^*+n^{\TT}_v)\leq c_v^s$. Therefore, we have $\Delta g_v(n^*_v)+a\sum_{u\in N_v}\Delta g_u(x_u^*+n^*_u)\leq c_v$. Hence,  $v$ does not deviate from its decision of playing $0$ in $\GG$. Hence,  $(x_v^*)_{v\in\VV}$ is a PSNE for the instance $(\GG,\HH, (g_v)_{v\in\VV},(c_v)_{v\in\VV},a)$ of BNPG game with altruism.
\end{proof}

\begin{proof}[Proof of \Cref{thm:3}]
	Let $0<k<|\VV|$ be an integer. First observe that if $k$ players are playing $1$, then $x_v+n_v$ for every player $v\in \VV$ is $k$. Let $\RR_0(k):=\{v: \Delta g_v(k)+a\cdot\sum_{u\in N_v}\Delta g_u(k)\leq c_v\}$ be the set of players which do not deviate from playing $0$ if their $k$ neighbours are playing $1$. Let $\RR_1(k):=\{v: \Delta g_v(k-1)+a\cdot\sum_{u\in N_v}\Delta g_u(k-1)\geq c_v\}$ be the set of players which do not deviate from playing $1$ if their $k-1$ neighbours are playing $1$. Now we claim that there is a PSNE where $k$ players are playing $1$ iff $|\RR_1(k)|\geq k$, $|\RR_0(k)|\geq |\VV|-k$ and $|\RR_0(k)\setminus \RR_1(k)|=|\VV\setminus \RR_1(k)|$.
	
	In one direction, suppose there is a PSNE $\textbf{x}^*$ with $k$ players playing $1$, then there is a subset $\mathcal{I}_1\subseteq \VV$ of at least $k$ players such that  $\forall v\in \mathcal{I}_1$, $c_v\leq \Delta g_v(k-1)+a\cdot\sum_{u\in N_v}\Delta g_u(k-1)$. Therefore, $|\RR_1(k)|\geq k$. Now in $\textbf{x}^*$ we have $|\VV|-k$ players who are playing $0$. It implies that there a subset $\mathcal{I}_2\subseteq \VV$ of at least $|\VV|-k$ players such that $\forall v\in \mathcal{I}_2$, $c_v\geq \Delta g_v(k)+a\cdot\sum_{u\in N_v}\Delta g_u(k)$. Therefore, $|\RR_0(k)|\geq |\VV|-k$. Now observe that a player $v\notin\RR_1(k)$ plays 0 in $\textbf{x}^*$ otherwise $v$ would deviate from its decision. This implies that $v\in \RR_0(k)$ as $v$ does not deviate from playing $0$ (recall, $\textbf{x}^*$ is a PSNE). And $\RR_0(k)\setminus \RR_1(k)\subseteq \VV\setminus \RR_1(k)$. Hence,$|\RR_0(k)\setminus \RR_1(k)|$ must be equal to $|\VV\setminus \RR_1(k)|$.
	
	In the other direction, let it be the case that $|\RR_1(k)|\geq k$, $|\RR_0(k)|\geq |\VV|-k$ and $|\RR_0(k)\setminus \RR_1(k)|=|\VV\setminus \RR_1(k)|$. Now let us construct a strategy profile $\textbf{x}^*=(x_v)_{v\in\VV}$ as follows. First, $\forall v\in \VV\setminus \RR_1(k)$, set $x_v=0$. Now consider a subset $\II \subseteq \RR_1(k)\cap\RR_0(k)$ such that $|\II|=|\VV|-k-|\VV\setminus \RR_1(k)|$. We always can choose such a subset \II due to the following:
	\begin{align*}
		|\RR_1(k)\cap\RR_0(k)|&=|\RR_0(k)|-|\RR_0(k)\setminus \RR_1(k)|\\
		&=|\RR_0(k)|-|\VV\setminus\RR_1(k)|\\
		&\geq |\VV|-k-|\VV\setminus \RR_1(k)|\\
		&= |\RR_1(k)|-k\geq 0
	\end{align*}
	Now, $\forall v\in \II$, set $x_v=0$. For the rest of the $k$ players who are not part of both \II and $ \VV\setminus \RR_1(k)$, we set their strategies as $1$. Now we claim that $\textbf{x}^*$ is a PSNE. A player $v$ with $x_v=1$ won't deviate as they are part of $\RR_1(k)$. Similarly, a player $v$ with $x_v=0$ won't deviate as they are part of $\RR_0(k)$. Hence $\textbf{x}^*$ is a PSNE.
	
	Having proved our claim, we now present the polynomial time algorithm. First check whether the strategy profile where all players do not invest is PSNE or not. This can checked in polynomial time. If it is not a PSNE, then check whether the strategy profile where all players invest is PSNE or not. This can also be checked in polynomial time. If it is not a PSNE, then check whether there is a $k$ such that $0<k<|\VV|$, $|\RR_1(k)|\geq k$, $|\RR_0(k)|\geq |\VV|-k$ and $|\RR_0(k)\setminus \RR_1(k)|=|\VV\setminus \RR_1(k)|$. If there is such a $k$, then there exists a PSNE otherwise there is no PSNE.
\end{proof}

\begin{proof}[Proof of \Cref{lem:1}]
Let $(\GG=(\VV=\{v_i: i\in[n]\},\EE),\HH=(\VV,\EE^\pr),(g_v)_{v\in\VV},(c_v)_{v\in\VV},a,C(.),B,\textbf{x}^*=(x_v)_{v\in\VV})$ be an instance of heterogeneous ANM with asymmetric altruism. $\forall v\in\VV$, let $c_v=c$. Using this instance, we create an instance $(\GG^\pr$=$(\VV^\pr,\EE^\prr),\HH^\pr$=$(\VV^\pr,E),(g_v^\pr)_{v\in\VV^\pr},(c_v^\pr)_{v\in\VV^\pr},a',C^\pr(.),B',$ $\textbf{x}^\pr=(x_v^\pr)_{v\in\VV^\pr})$ of fully homogeneous ANM with asymmetric altruism. First we set $a'=a$ and $B'=B$.
	
	Next we construct the graph $\GG^\pr=(\VV^\pr,\EE^\prr)$ as follows.
	\begin{align*}
		\VV^\pr &= \bigcup_{i\in[n]} V_i \cup\{u_i: i\in[n]\}, \text{where }\\
		\forall i\in[n], &\text{ }V_i = \{v_j^i: j\in[2+n(i-1)]\} \\
		\EE^\prr &= \bigcup_{i\in[n]} E_i \cup \{\{u_i,u_j\}: \{v_i,v_j\}\in\EE\}, \text{where }\\
		\forall i\in[n], &\text{ }E_i = \{\{u_i,v_j^i\}:j\in[2+n(i-1)]\}
	\end{align*}
	
	Let $f(x)=1+\lfloor\frac{x-2}{n}\rfloor$ and $h(x)=x-(2+n(f(x)-1))$. We now recursively define a function $g(.)$ as follows.
	\[
	g(x)=
	\begin{cases}
	c\cdot x & \text{if } x\in\{0,1,2\}\\
	g(x-1)+ \Delta g_{v_{f(x-1)}}(h(x-1))& \text{if }x>2 \\
	\end{cases}
	\]
	
Now for all $v\in \VV^\pr$, we set $g_v^\pr(.)=g(.)$ and $c_v^\pr=c$. Now for all $i\in[n]$, set $x_{u_i}^\pr=x_{v_i}$. For all $v\in\VV^\pr\setminus \{u_i:i\in[n]\}$, set $x_{v}^\pr=1$. Now we construct the altruistic network $\HH^\pr=(\VV^\pr,E)$  as follows:
\begin{equation*}
E=\{(u_i,u_j):i,j\in[n],(v_i,v_j)\in \EE^\pr\}
\end{equation*}
Now we define the cost function $C^\pr(.)$. For all $i,j\in[n]$ such that $(u_i,u_j)$ is allowed be added to $\HH^\pr$, $C^\pr((u_i,u_j))=C((v_i,v_j))$. For remaining edges $e$ which are allowed to be added to $\HH^\pr$, $C^\pr(e)=B+1$.
	
This completes the description of fully homogeneous ANM with asymmetric altruism. We now claim that the instance of heterogeneous ANM with asymmetric altruism is a \YES instance iff the instance of fully homogeneous ANM with asymmetric altruism is a \YES instance.

In one direction, let heterogeneous ANM with asymmetric altruism be a \YES instance. For all $i,j\in[n]$, if $(v_i,v_j)$ was added to \HH, then add $(u_i,u_j)$ to $\HH^\pr$. Similarly for all $i,j\in[n]$, if $(v_i,v_j)$ was removed from \HH, then remove $(u_i,u_j)$ from $\HH^\pr$. Now we show that $\textbf{x}^\pr$ becomes a PSNE. For all $i\in[n]$, let the number of neighbours of $v_i$ playing $1$ in $\textbf{x}^*$ be $n_{v_i}$. Then the number of neighbouts of $u_i$ playing $1$ in $\textbf{x}^\pr$ is $2+(i-1)n+n_{v_i}$. Now for all $i\in [n]$ such that $x_{u_i}^\pr=x_{v_i}=1$, we have $\Delta g(2+(i-1)n+n_{v_i})+a\sum_{u_j\in N_{u_i}}\Delta g(2+(j-1)n+n_{v_j}+x_{u_j}^\pr-1)=\Delta g_{v_i}(n_{v_i})+a\sum_{v_j\in N_{v_i}}\Delta g_{v_j}(n_{v_j}+x_{v_j}-1)\geq c$. Hence $u_i$ doesn't deviate from playing $1$. Similarly for all $i\in [n]$ such that $x_{u_i}^\pr=x_{v_i}=0$, we have $\Delta g(2+(i-1)n+n_{v_i})+a\sum_{u_j\in N_{u_i}}\Delta g(2+(j-1)n+n_{v_j}+x_{u_j}^\pr)=\Delta g_{v_i}(n_{v_i})+a\sum_{v_j\in N_{v_i}}\Delta g_{v_j}(n_{v_j}+x_{v_j})\leq c$. Hence $u_i$ doesn't deviate from playing $0$. Remaining nodes $u$ in $\GG^\pr$ don't deviate from playing $1$ as $\Delta g(0)$ and $\Delta g(1)$ is at least $c$. Hence fully homogeneous ANM with asymmetric altruism is a \YES instance.

In other direction, let fully homogeneous ANM with asymmetric altruism be a \YES instance. First observe that an edge not having both the endpoints in the set $\{u_i:i\in[n]\}$ can't be added to $\HH^\pr$ otherwise the total cost of the edges added would exceed $B$. For all $i,j\in[n]$, if $(u_i,u_j)$ was added to $\HH^\pr$, then add $(v_i,v_j)$ to $\HH$. Similarly for all $i,j\in[n]$, if $(u_i,u_j)$ was removed from $\HH^\pr$, then remove $(v_i,v_j)$ from $\HH$. Now we show that $\textbf{x}^*$ becomes a PSNE. For all $i\in[n]$, let the number of neighbours of $v_i$ playing $1$ in $\textbf{x}^*$ be $n_{v_i}$. Then the number of neighbouts of $u_i$ playing $1$ in $\textbf{x}^\pr$ is $2+(i-1)n+n_{v_i}$. Now for all $i\in [n]$ such that $x_{u_i}^\pr=x_{v_i}=1$, we have $\Delta g_{v_i}(n_{v_i})+a\sum_{v_j\in N_{v_i}}\Delta g_{v_j}(n_{v_j}+x_{v_i}-1)=\Delta g(2+(i-1)n+n_{v_i})+a\sum_{u_j\in N_{u_i}}\Delta g(2+(j-1)n+n_{v_j}+x_{u_j}^\pr-1)\geq c$. Hence $v_i$ doesn't deviate from playing $1$. Similarly for all $i\in [n]$ such that $x_{u_i}^\pr=x_{v_i}=0$, we have $\Delta g_{v_i}(n_{v_i})+a\sum_{v_j\in N_{u_i}}\Delta g_{v_j}(n_{v_j}+x_{v_j})=\Delta g(2+(i-1)n+n_{v_i})+a\sum_{u_j\in N_{u_i}} \Delta g(2+(j-1)n+n_{v_j}+x_{u_j}^\pr)\leq c$. Hence $v_i$ doesn't deviate from playing $0$. Hence heterogeneous ANM with asymmetric altruism is a \YES instance.
\end{proof}

\begin{proof}[Proof of \Cref{lem:2}]
Let $(\GG=(\VV=\{v_i: i\in[n]\},\EE),\HH=(\VV,\EE^\pr),(g_v)_{v\in\VV},(c_v)_{v\in\VV},a,C(.),B,\textbf{x}^*=(x_v)_{v\in\VV})$ be an instance of heterogeneous ANM with symmetric altruism. $\forall v\in\VV$, let $c_v=c$. Using this instance, we create an instance $(\GG^\pr$=$(\VV^\pr,\EE^\prr),\HH^\pr$=$(\VV^\pr,E),(g_v^\pr)_{v\in\VV^\pr},(c_v^\pr)_{v\in\VV^\pr},a',C^\pr(.),B',$ $\textbf{x}^\pr=(x_v^\pr)_{v\in\VV^\pr})$ of fully homogeneous ANM with symmetric altruism. First we set $a'=a$ and $B'=B$.
	
	Next we construct the graph $\GG^\pr=(\VV^\pr,\EE^\prr)$ as follows.
	\begin{align*}
		\VV^\pr &= \bigcup_{i\in[n]} V_i \cup\{u_i: i\in[n]\}, \text{where }\\
		\forall i\in[n], &\text{ }V_i = \{v_j^i: j\in[2+n(i-1)]\} \\
		\EE^\prr &= \bigcup_{i\in[n]} E_i \cup \{\{u_i,u_j\}: \{v_i,v_j\}\in\EE\}, \text{where }\\
		\forall i\in[n], &\text{ }E_i = \{\{u_i,v_j^i\}:j\in[2+n(i-1)]\}
	\end{align*}
	
	Let $f(x)=1+\lfloor\frac{x-2}{n}\rfloor$ and $h(x)=x-(2+n(f(x)-1))$. We now recursively define a function $g(.)$ as follows.
	\[
	g(x)=
	\begin{cases}
	c\cdot x & \text{if } x\in\{0,1,2\}\\
	g(x-1)+ \Delta g_{v_{f(x-1)}}(h(x-1))& \text{if }x>2 \\
	\end{cases}
	\]
	
Now for all $v\in \VV^\pr$, we set $g_v^\pr(.)=g(.)$ and $c_v^\pr=c$. Now for all $i\in[n]$, set $x_{u_i}^\pr=x_{v_i}$. For all $v\in\VV^\pr\setminus \{u_i:i\in[n]\}$, set $x_{v}^\pr=1$. Now we construct the altruistic network $\HH^\pr=(\VV^\pr,E)$  as follows:
\begin{equation*}
E=\{\{u_i,u_j\}:i,j\in[n],\{v_i,v_j\}\in \EE^\pr\}
\end{equation*}
Now we define the cost function $C^\pr(.)$. For all $i,j\in[n]$ such that $\{u_i,u_j\}$ is allowed be added to $\HH^\pr$, $C^\pr(\{u_i,u_j\})=C(\{v_i,v_j\})$. For remaining edges $e$ which are allowed to be added to $\HH^\pr$, $C^\pr(e)=B+1$.
	
This completes the description of fully homogeneous ANM with symmetric altruism. We now claim that the instance of heterogeneous ANM with symmetric altruism is a \YES instance iff the instance of fully homogeneous ANM with symmetric altruism is a \YES instance.

In one direction, let heterogeneous ANM with symmetric altruism be a \YES instance. For all $i,j\in[n]$, if $\{v_i,v_j\}$ was added to \HH, then add $\{u_i,u_j\}$ to $\HH^\pr$. Similarly for all $i,j\in[n]$, if $\{v_i,v_j\}$ was removed from \HH, then remove $\{u_i,u_j\}$ from $\HH^\pr$. Now we show that $\textbf{x}^\pr$ becomes a PSNE. For all $i\in[n]$, let the number of neighbours of $v_i$ playing $1$ in $\textbf{x}^*$ be $n_{v_i}$. Then the number of neighbouts of $u_i$ playing $1$ in $\textbf{x}^\pr$ is $2+(i-1)n+n_{v_i}$. Now for all $i\in [n]$ such that $x_{u_i}^\pr=x_{v_i}=1$, we have $\Delta g(2+(i-1)n+n_{v_i})+a\sum_{u_j\in N_{u_i}}\Delta g(2+(j-1)n+n_{v_j}+x_{u_j}^\pr-1)=\Delta g_{v_i}(n_{v_i})+a\sum_{v_j\in N_{v_i}}\Delta g_{v_j}(n_{v_j}+x_{v_j}-1)\geq c$. Hence $u_i$ doesn't deviate from playing $1$. Similarly for all $i\in [n]$ such that $x_{u_i}^\pr=x_{v_i}=0$, we have $\Delta g(2+(i-1)n+n_{v_i})+a\sum_{u_j\in N_{u_i}}\Delta g(2+(j-1)n+n_{v_j}+x_{u_j}^\pr)=\Delta g_{v_i}(n_{v_i})+a\sum_{v_j\in N_{v_i}}\Delta g_{v_j}(n_{v_j}+x_{v_j})\leq c$. Hence $u_i$ doesn't deviate from playing $0$. Remaining nodes $u$ in $\GG^\pr$ don't deviate from playing $1$ as $\Delta g(0)$ and $\Delta g(1)$ is at least $c$. Hence fully homogeneous ANM with symmetric altruism is a \YES instance.

In other direction, let fully homogeneous ANM with symmetric altruism be a \YES instance. First observe that an edge not having both the endpoints in the set $\{u_i:i\in[n]\}$ can't be added to $\HH^\pr$ otherwise the total cost of the edges added would exceed $B$. For all $i,j\in[n]$, if $\{u_i,u_j\}$ was added to $\HH^\pr$, then add $\{v_i,v_j\}$ to $\HH$. Similarly for all $i,j\in[n]$, if $\{u_i,u_j\}$ was removed from $\HH^\pr$, then remove $\{v_i,v_j\}$ from $\HH$. Now we show that $\textbf{x}^*$ becomes a PSNE. For all $i\in[n]$, let the number of neighbours of $v_i$ playing $1$ in $\textbf{x}^*$ be $n_{v_i}$. Then the number of neighbouts of $u_i$ playing $1$ in $\textbf{x}^\pr$ is $2+(i-1)n+n_{v_i}$. Now for all $i\in [n]$ such that $x_{u_i}^\pr=x_{v_i}=1$, we have $\Delta g_{v_i}(n_{v_i})+a\sum_{v_j\in N_{v_i}}\Delta g_{v_j}(n_{v_j}+x_{v_i}-1)=\Delta g(2+(i-1)n+n_{v_i})+a\sum_{u_j\in N_{u_i}}\Delta g(2+(j-1)n+n_{v_j}+x_{u_j}^\pr-1)\geq c$. Hence $v_i$ doesn't deviate from playing $1$. Similarly for all $i\in [n]$ such that $x_{u_i}^\pr=x_{v_i}=0$, we have $\Delta g_{v_i}(n_{v_i})+a\sum_{v_j\in N_{u_i}}\Delta g_{v_j}(n_{v_j}+x_{v_j})=\Delta g(2+(i-1)n+n_{v_i})+a\sum_{u_j\in N_{u_i}} \Delta g(2+(j-1)n+n_{v_j}+x_{u_j}^\pr)\leq c$. Hence $v_i$ doesn't deviate from playing $0$. Hence heterogeneous ANM with symmetric altruism is a \YES instance.
\end{proof}

\begin{proof}[Proof of \Cref{lem:3}]
Let $(\GG=(\VV=\{v_i: i\in[n]\},\EE),\HH=(\VV,\EE^\pr),(g_v)_{v\in\VV},(c_v)_{v\in\VV},a,C(.),B,\textbf{x}^*=(x_v)_{v\in\VV})$ be an instance of heterogeneous ANM with symmetric altruism. Let us partition \VV into three sets $\VV_1$, $\VV_2$ and $\VV_3$ such that $\forall i\in[3]$ and $\forall v\in\VV_i$, we have $g_v=g_i$ and $c_v=c$. Let $p:[n]\longrightarrow [3]$ be a function such that $p(j)=i$ if $v_j\in \VV_i$.

 Using this instance, we create an instance $(\GG^\pr$=$(\VV^\pr,\EE^\prr),\HH^\pr$=$(\VV^\pr,E),(g_v^\pr)_{v\in\VV^\pr},(c_v^\pr)_{v\in\VV^\pr},a',C^\pr(.),B',$ $\textbf{x}^\pr=(x_v^\pr)_{v\in\VV^\pr})$ of fully homogeneous ANM with symmetric altruism. First we set $a'=a$ and $B'=B$.
	
	Next we construct the graph $\GG^\pr=(\VV^\pr,\EE^\prr)$ as follows.
	\begin{align*}
		\VV^\pr &= \bigcup_{i\in[3]}V_i \cup\{u_i: i\in[n]\}, \text{where }\\
		\forall i\in[3], &\text{ }V_i = \{v_j^i: j\in[2+4(i-1)]\} \\
		\EE^\prr &= \bigcup_{i\in[3]} E_i \cup \{\{u_i,u_j\}: \{v_i,v_j\}\in\EE\}, \text{where }\\
		\forall i\in[3], &\text{ }E_i = \{\{u,v_j^i\}:j\in[2+4(i-1)],u\in V_i\}
	\end{align*}
	It is easy to observe that the maximum degree of $\GG^\pr$ is 13.

	Let $f(x)=1+\lfloor\frac{x-2}{4}\rfloor$ and $h(x)=x-(2+4(f(x)-1))$. We now recursively define a function $g(.)$ as follows.
	\[
	g(x)=
	\begin{cases}
	c\cdot x & \text{if } x\in\{0,1,2\}\\
	g(x-1)+ \Delta g_{{f(x-1)}}(h(x-1))& \text{if }x>2 \\
	\end{cases}
	\]
	For all $i\in[n]$, let $c_{u_i}^\pr=c_{v_i}$. For all $v\in\VV^\pr\setminus \{u_i:i\in[n]\}$, $c_v^\pr=c$. Now for all $v\in \VV^\pr$, we set $g_v^\pr(.)=g(.)$. Now for all $i\in[n]$, set $x_{u_i}^\pr=x_{v_i}$. For all $v\in\VV^\pr\setminus \{u_i:i\in[n]\}$, set $x_{v}^\pr=1$. Now we construct the altruistic network $\HH^\pr=(\VV^\pr,E)$  as follows:
\begin{equation*}
E=\{\{u_i,u_j\}:i,j\in[n],\{v_i,v_j\}\in \EE^\pr\}
\end{equation*}
Now we define the cost function $C^\pr(.)$. For all $i,j\in[n]$ such that $\{u_i,u_j\}$ is allowed be added to $\HH^\pr$, $C^\pr(\{u_i,u_j\})=C(\{v_i,v_j\})$. For remaining edges $e$ which are allowed to be added to $\HH^\pr$, $C^\pr(e)=B+1$.
	
This completes the description of fully homogeneous ANM with symmetric altruism. We now claim that the instance of heterogeneous ANM with symmetric altruism is a \YES instance iff the instance of fully homogeneous ANM with symmetric altruism is a \YES instance.

In one direction, let heterogeneous ANM with symmetric altruism be a \YES instance. For all $i,j\in[n]$, if $\{v_i,v_j\}$ was added to \HH, then add $\{u_i,u_j\}$ to $\HH^\pr$. Similarly for all $i,j\in[n]$, if $\{v_i,v_j\}$ was removed from \HH, then remove $\{u_i,u_j\}$ from $\HH^\pr$. Now we show that $\textbf{x}^\pr$ becomes a PSNE. For all $i\in[n]$, let the number of neighbours of $v_i$ playing $1$ in $\textbf{x}^*$ be $n_{v_i}$. Then the number of neighbouts of $u_i$ playing $1$ in $\textbf{x}^\pr$ is $2+4(p(i)-1)+n_{v_i}$. Now for all $i\in [n]$ such that $x_{u_i}^\pr=x_{v_i}=1$, we have $\Delta g(2+4(p(i)-1)+n_{v_i})+a\sum_{u_j\in N_{u_i}}\Delta g(2+4(p(j)-1)+n_{v_j}+x_{u_j}^\pr-1)=\Delta g_{p(i)}(n_{v_i})+a\sum_{v_j\in N_{v_i}}\Delta g_{p(j)}(n_{v_j}+x_{v_j}-1)\geq c$. Hence $u_i$ doesn't deviate from playing $1$. Similarly for all $i\in [n]$ such that $x_{u_i}^\pr=x_{v_i}=0$, we have $\Delta g(2+4(p(i)-1)+n_{v_i})+a\sum_{u_j\in N_{u_i}}\Delta g(2+4(p(j)-1)+n_{v_j}+x_{u_j}^\pr)=\Delta g_{p(i)}(n_{v_i})+a\sum_{v_j\in N_{v_i}}\Delta g_{p(j)}(n_{v_j}+x_{v_j})\leq c$. Hence $u_i$ doesn't deviate from playing $0$. Remaining nodes $u$ in $\GG^\pr$ don't deviate from playing $1$ as $\Delta g(0)$ and $\Delta g(1)$ is at least $c$. Hence fully homogeneous ANM with symmetric altruism is a \YES instance.

In other direction, let fully homogeneous ANM with symmetric altruism be a \YES instance. First observe that an edge not having both the endpoints in the set $\{u_i:i\in[n]\}$ can't be added to $\HH^\pr$ otherwise the total cost of the edges added would exceed $B$. For all $i,j\in[n]$, if $\{u_i,u_j\}$ was added to $\HH^\pr$, then add $\{v_i,v_j\}$ to $\HH$. Similarly for all $i,j\in[n]$, if $\{u_i,u_j\}$ was removed from $\HH^\pr$, then remove $\{v_i,v_j\}$ from $\HH$. Now we show that $\textbf{x}^*$ becomes a PSNE. For all $i\in[n]$, let the number of neighbours of $v_i$ playing $1$ in $\textbf{x}^*$ be $n_{v_i}$. Then the number of neighbouts of $u_i$ playing $1$ in $\textbf{x}^\pr$ is $2+4(p(i)-1)+n_{v_i}$. Now for all $i\in [n]$ such that $x_{u_i}^\pr=x_{v_i}=1$, we have $\Delta g_{p(i)}(n_{v_i})+a\sum_{v_j\in N_{v_i}}\Delta g_{p(j)}(n_{v_j}+x_{v_i}-1)=\Delta g(2+4(p(i)-1)+n_{v_i})+a\sum_{u_j\in N_{u_i}}\Delta g(2+4(p(j)-1)+n_{v_j}+x_{u_j}^\pr-1)\geq c$. Hence $v_i$ doesn't deviate from playing $1$. Similarly for all $i\in [n]$ such that $x_{u_i}^\pr=x_{v_i}=0$, we have $\Delta g_{p(i)}(n_{v_i})+a\sum_{v_j\in N_{u_i}}\Delta g_{p(j)}(n_{v_j}+x_{v_j})=\Delta g(2+4(p(i)-1)+n_{v_i})+a\sum_{u_j\in N_{u_i}} \Delta g(2+4(p(j)-1)+n_{v_j}+x_{u_j}^\pr)\leq c$. Hence $v_i$ doesn't deviate from playing $0$. Hence heterogeneous ANM with symmetric altruism is a \YES instance.
\end{proof}

\begin{proof}[Proof of \Cref{thm:5}]
	We reduce from the decision version of the KNAPSACK Problem. In Knapsack problem, we are give a set of items $1,\ldots,n$ with profits $p_1,\ldots p_n$ and weights $w_1,\ldots w_n$. The aim is to check whether there exists a subset $S$ of items such that $\sum_{i\in S}p_i\geq P$ and $\sum_{i\in S}w_i\leq W$. Now we create an instance of ANM with asymmetric altruism. We set $a=1$. The input graph \GG(\VV,\EE) is defined as follows
	\begin{align*}
		\VV &= \{u_{i}:i\in[2n+1]\}\\
		\EE &= \{\{u_{i},u_{n+i}\},\{u_{i},u_{2n+1}\}: i\in [n]\}\\
	\end{align*}
	Let the initial altruistic graph $\HH(\VV,\EE^\pr)$ be empty. Let the target profile $x^*$ have all the players investing (i.e, playing 1). Now we define the functions $g_u(.)$ for all $u\in \VV$. $g_{u_{2n+1}}(x)=0$ for all $x\geq 0$. For all $i\in[n]$, $g_{u_i}(x)=x\cdot p_i$ for all $x\geq 0$. For all $i\in[n]$, $g_{u_{n+i}}(x)=x\cdot P$ for all $x\geq 0$. For all $i\in[2n+1]$, $c_{u_i}=P$. Now we define the cost of the introducing an atruistic edge. The cost of introducing the edge $(u_{2n+1},u_{i})$ is $w_i$. For remaining edges $e$ which are allowed to be added to $\HH$, cost of adding $e$ is $0$. Let the total budget be $W$. This completes the description of the instance of ANM with asymmetric altruism.
	
	Now we show that KNAPSACK Problem is a yes instance iff ANM with asymmetric altruism is a yes instance. In other direction, let KNAPSACK problem be a yes instance. Then there is a subset $S$  of items such that $\sum_{i\in S}p_i\geq P$ and $\sum_{i\in S}w_i\leq W$. Now if we introduce the set of altruistic edges $\{(u_{2n+1},u_i):i\in S\}\cup\{(u_i,u_{n+i}):i\in[n]\}$, then the target profile becomes a PSNE and the total of introducing these edges is at most $B$. Hence ANM with asymmetric altruism is a yes instance.
	
	In the other direction, let the ANM with asymmetric altruism be a yes instance. Then there is a set of altruistic edges $S'$ of total cost at most $B$ such that when they are introduced the target profile becomes a PSNE. Let $S:=\{i:(u_{2n+1},u_i)\in S'\}$. Hence if the subset $S$  of items is chosen then we have $\sum_{i\in S}p_i\geq P$ and $\sum_{i\in S}w_i\leq W$.  Hence the KNAPSACK problem is a yes instance.
	
	Applying \Cref{lem:1} concludes the proof of this theorem.
\end{proof}

\begin{proof}[Proof of \Cref{thm:6}]
We reduce from the decision version of the KNAPSACK Problem. In Knapsack problem, we are give a set of items $1,\ldots,n$ with profits $p_1,\ldots p_n$ and weights $w_1,\ldots w_n$. The aim is to check whether there exists a subset $S$ of items such that $\sum_{i\in S}p_i\geq P$ and $\sum_{i\in S}w_i\leq W$. Now we create an instance of ANM with symmetric altruism as follows. We set $a=1$. The input graph \GG(\VV,\EE) is defined as follows
\begin{align*}
		\VV &= \{u_{i}:i\in[2n+1]\}\\
		\EE &= \{\{u_{i},u_{n+i}\},\{u_{i},u_{2n+1}\}: i\in [n]\}\\
\end{align*}
Let the initial altruistic graph $\HH(\VV,\EE^\pr)$ be empty. Let the target profile $x^*$ have all the players investing (i.e, playing 1). Now we define the functions $g_u(.)$ for all $u\in \VV$. $g_{u_{2n+1}}(x)=0$ for all $x\geq 0$. For all $i\in[n]$, $g_{u_i}(x)=x\cdot p_i$ for all $x\geq 0$. For all $i\in[n]$, $g_{u_{n+i}}(x)=x\cdot P$ for all $x\geq 0$. For all $i\in[2n+1]$, $c_{u_i}=P$. Now we define the cost of the introducing an atruistic edge. For all $i\in[n]$, the cost of introducing the edge $\{u_{2n+1},u_{i}\}$ is $w_i$. For all $i\in[n]$, the cost of introducing the edge $\{u_{n+i},u_{i}\}$ is $0$. Let the total budget be $W$. This completes the description of the instance of ANM with symmetric altruism.

Now we show that KNAPSACK Problem is a yes instance iff ANM with symmetric altruism is a yes instance. In other direction, let KNAPSACK problem be a yes instance. Then there is a subset $S$  of items such that $\sum_{i\in S}p_i\geq P$ and $\sum_{i\in S}w_i\leq W$. Now if we introduce the set of altruistic edges $\{\{u_{2n+1},u_i\}:i\in S\}\cup\{\{u_{n+i},u_i\}:i\in[n]\}$, then the target profile becomes a PSNE and the total of introducing these edges is at most $B$. Hence ANM with symmetric altruism is a yes instance.

In the other direction, let the ANM with symmetric altruism be a yes instance. Then there is a set of altruistic edges $S'$ of total cost at most $B$ such that when they are introduced the target profile becomes a PSNE. Let $S:=\{i:\{u_{2n+1},u_i\}\in S'\}$. Hence if the subset $S$  of items is chosen then we have $\sum_{i\in S}p_i\geq P$ and $\sum_{i\in S}w_i\leq W$.  Hence the KNAPSACK problem is a yes instance.

Applying \Cref{lem:2} concludes the proof of this theorem.
\end{proof}

\begin{proof}[Proof of \Cref{thm:9}]
To show the \NP-hardness, we reduce from an instance of \tsat which we denote by $(\XX=\{x_i:{i\in[n]}\}, \CC=\{C_j: j\in[m]\})$. We define a function $h:\{x_i,\bar{x}_i: i\in[n]\}\rightarrow\{z_i,\bar{z}_i: i\in[n]\}$ as $h(x_i)=z_i$ and $h(\bar{x}_i)=\bar{z}_i$ for all $i\in[n]$. We now create an instance of ANM with symmetric altruism as follows. Let the initial altruistic network \HH be empty and $a=2$. Input graph \GG=(\VV,\EE) for the input BNPG game is as follows:
	\begin{align*}
	\VV &= \{z_i, \bar{z}_i,b_i: i\in[n]\} \cup \{y_j: j\in[m]\}\\
	\EE &= \{\{y_j,h(l_1^j)\}, \{y_j,h(l_2^j)\}, \{y_j,h(l_3^j)\}:
	C_j =\\ & (l_1^j\vee l_2^j\vee l_3^j), j\in[m]\}
	\cup \{\{z_i,b_i\},\{b_i,\bar{z}_i\}: i\in[n]\}
	\end{align*}
	
	Now observe that the degree of every vertex in \GG is at most $3$. We now define $(c_v)_{v\in\VV}$ and $(g_v)_{v\in\VV}$. $\forall v\in \VV, c_v=15$. $\forall j\in[m] \text{ and } \forall x\in \NB\cup\{0\},\text{ }g_{y_j}(x)=x$. $\forall i\in[n] \text{ and } \forall x\in \NB\cup\{0\},\text{ } g_{z_i}(x)=g_{\bar{z}_i}(x)= 10x$. $\forall i\in[n] \text{ and } \forall x\in \NB\cup\{0\},\text{ } g_{b_i}(x)= 2x$. Target profile $\textbf{x}^*=(x_v)_{v\in\VV}$ is defined as follows. For all $j\in[m]$, $x_{y_j}=1$. For all $i\in[n]$, $x_{z_i}=x_{\bar{z}_i}=0$ and $x_{b_i}=1$. Now define the cost of adding edges to $\HH$. Cost of adding any edge is $0$. The budget $B$ is infinite. This completes the construction of the instance of ANM with symmetric altruism.

Now we prove that the instance of \tsat is a \YES instance iff the instance of ANM with symmetric altruism is a \YES instance. In one direction, let \tsat be a \YES instance and its satisfying assignment be $f:\{x_i,\bar{x}_i: i\in[n]\}\rightarrow\{\true, \false\}$. For all $i\in[n]$, we now do the following: 
	\begin{itemize}
		\item Let $h^{-1}(z_i)$ be part of the clauses $C_j$ and $C_{j'}$. Then add the edges $\{z_i,y_j\}, \{z_i,y_{j'}\}$ to \HH if $f(x_i)=\true$, otherwise add the edge $\{z_i,b_i\}$.
		\item Let $h^{-1}(\bar{z}_i)$ be part of the clauses $C_j$ and $C_{j'}$. Then add the edges $\{\bar{z}_i,y_j\}, \{\bar{z}_i,y_{j'}\}$ to \HH if $f(\bar{x}_i)=\true$, otherwise add the edge $\{\bar{z}_i,b_i\}$.
	\end{itemize}
It is easy to observe that after adding the above edges, $\textbf{x}^*$ becomes a PSNE. Hence the instance of ANM with symmetric altrusim is a \YES instance.
	
In the other direction, let ANM with symmetric altruism be a \YES instance. First observe that for all $i\in[n]$, either $\{z_i,b_i\}$ or $\{\bar{z}_i,b_i\}$ must have been added to \HH. Let $S:=\{u :u\in\VV, \{u,b_i\} \text{ is added to \HH }\}$. Let $S^\pr:=  \{z_i, \bar{z}_i: i\in[n]\}\setminus S$. Now for all $u\in S$, there is no $j\in[m]$ such that $\{u,y_j\}$ is added to \HH otherwise $u$ will deviate from playing $0$. Now consider the following assignment $f:\{x_i,\bar{x}_i: i\in[n]\}\rightarrow\{\true, \false\}$ of \tsat instance. For all $i\in[n]$, if $h(x_i)\in S^\pr$ we have $f(x_i)=\true$ otherwise we have $f(x_i)=\false$. Now we show that $f$ is a satisfying assignment. If not there is a clause $C_j = (l_1^j\vee l_2^j\vee l_3^j)$ which is not satisfied. Hence $\{h(l_1^j),y_j\}, \{h(l_1^j),y_j\}, \{h(l_1^j),y_j\}$ are not added to \HH. But then $y_j$ would deviate from playing $1$ which is a contradiction. Hence $f$ is a satisfying assignment. Hence the instance of \tsat is a \YES instance.

Applying \Cref{lem:3} concludes the proof of this theorem.
\end{proof}

\begin{proof}[Proof of \Cref{thm:7}]
	To show the \NP-hardness, we reduce from an instance of \tsat which we denote by $(\XX=\{x_i:{i\in[n]}\}, \CC=\{C_j: j\in[m]\})$. We define a function $h:\{x_i,\bar{x}_i: i\in[n]\}\rightarrow\{z_i,\bar{z}_i: i\in[n]\}$ as $h(x_i)=z_i$ and $h(\bar{x}_i)=\bar{z}_i$ for all $i\in[n]$. We now create an instance of ANM with symmetric altruism as follows. Let the initial altruistic network \HH be empty and $a=0.5$. Input graph \GG=(\VV,\EE) for the input BNPG game is as follows:
	\begin{align*}
		\VV &= \{z_i, \bar{z}_i,b_i: i\in[n]\} \cup \{y_j: j\in[m]\}\\
		\EE &= \{\{y_j,h(l_1^j)\}, \{y_j,h(l_2^j)\}, \{y_j,h(l_3^j)\}:
		C_j =\\ & (l_1^j\vee l_2^j\vee l_3^j), j\in[m]\}
		\cup \{\{z_i,b_i\},\{b_i,\bar{z}_i\}: i\in[n]\}
	\end{align*}
	
	Now observe that the degree of every vertex in \GG is at most $3$. We now define $(c_v)_{v\in\VV}$ and $(g_v)_{v\in\VV}$. $\forall v\in \VV, c_v=315$. $\forall j\in[m] \text{ and } \forall x\in \NB\cup\{0\},\text{ }g_{y_j}(x)=220x$. $\forall i\in[n] \text{ and } \forall x\in \NB\cup\{0\},\text{ } g_{z_i}(x)=g_{\bar{z}_i}(x)= 200x$. $\forall i\in[n] \text{ and } \forall x\in \NB\cup\{0\},\text{ } g_{b_i}(x)= 240x$. Target profile $\textbf{x}^*$ has all the players investing. Now define the cost of adding edges to $\HH$. Cost of adding an edge from the set $\{\{y_j,h(l_1^j)\}, \{y_j,h(l_2^j)\}, \{y_j,h(l_3^j)\}: C_j =(l_1^j\vee l_2^j\vee l_3^j), j\in[m]\}$ is $1$. Cost of adding an edge from the set $\{\{z_i,b_i\},\{b_i,\bar{z}_i\}: i\in[n]\}$ is $c=2n+1$. The budget $B$ is $n(2+c)$. This completes the construction of the instance of ANM with symmetric altruism.
	
	Now we prove that the instance of \tsat is a \YES instance iff the instance of ANM with symmetric altruism is a \YES instance. In one direction, let \tsat be a \YES instance and its satisfying assignment be $f:\{x_i,\bar{x}_i: i\in[n]\}\rightarrow\{\true, \false\}$. For all $i\in[n]$, we now do the following:
	\begin{itemize}
		\item Let $h^{-1}(z_i)$ be part of the clauses $C_j$ and $C_{j'}$. Then add the edges $\{z_i,y_j\}, \{z_i,y_{j'}\}$ to \HH if $f(x_i)=\true$, otherwise add the edge $\{z_i,b_i\}$.
		\item Let $h^{-1}(\bar{z}_i)$ be part of the clauses $C_j$ and $C_{j'}$. Then add the edges $\{\bar{z}_i,y_j\}, \{\bar{z}_i,y_{j'}\}$ to \HH if $f(\bar{x}_i)=\true$, otherwise add the edge $\{\bar{z}_i,b_i\}$.
	\end{itemize}
	The total cost of adding the above edges is $n(2+c)$. It is easy to observe that after adding the above edges, $\textbf{x}^*$ becomes a PSNE. Hence the instance of ANM with symmetric altruism is a \YES instance.\shortversion{Similarly, the other direction can also be shown.}
	
	In the other direction, let ANM with symmetric altruism be a \YES instance. First observe that for all $i\in[n]$, either $\{z_i,b_i\}$ or $\{\bar{z}_i,b_i\}$ must have been added to \HH. Also for no $i\in [n]$, both $\{z_i,b_i\}$ and $\{\bar{z}_i,b_i\}$ are added to \HH otherwise the total cost of edges added would exceed $B$. Let $S:=\{u :u\in\VV, \{u,b_i\} \text{ is added to \HH }\}$. Let $S^\pr:=  \{z_i, \bar{z}_i: i\in[n]\}\setminus S$. Now for all $u\in S^\pr$, $\{u,y_j\},\{u,y_{j'}\}$ must have been added to \HH otherwise $u$ will deviate from playing $1$. Here $h^{-1}(u)$ are part of the clauses $C_j$ and $C_{j'}$. No other edge can be added to \HH otherwise the total cost of edges added would exceed $B$. Now consider the following assignment $f:\{x_i,\bar{x}_i: i\in[n]\}\rightarrow\{\true, \false\}$ of \tsat instance. For all $u\in S^\pr$ we have $f(h^{-1}(u))=\true$ and for all $v\in S$ we have $f(h^{-1}(v))=\false$. Observe that there is no $i\in[n]$ such that $f(x_i)=f(\bar{x}_i)$. Now we show that $f$ is a satisfying assignment. If not there is a clause $C_j = (l_1^j\vee l_2^j\vee l_3^j)$ which is not satisfied. Hence $\{h(l_1^j),y_j\}, \{h(l_1^j),y_j\}, \{h(l_1^j),y_j\}$ are not added to \HH. But then $y_j$ would deviate from playing $1$ which is a contradiction. Hence $f$ is a satisfying assignment. Hence the instance of \tsat is a \YES instance.
	Applying \Cref{lem:3} concludes the proof of this theorem.
\end{proof}

\end{document}